\documentclass[runningheads]{llncs}

\usepackage[T1]{fontenc}
\usepackage[utf8]{inputenc}
\usepackage{amsmath,amssymb,amsfonts,mathtools}
\usepackage{hyperref}
\usepackage{microtype}
\usepackage{enumitem}
\usepackage{booktabs}
\usepackage{array}
\usepackage{tikz}
\usetikzlibrary{arrows.meta,positioning}

\hypersetup{
  colorlinks=true,
  linkcolor=blue,
  citecolor=blue,
  urlcolor=blue
}

\newcommand{\Com}{\mathsf{Com}}
\newcommand{\Vf}{\mathsf{Vf}}
\newcommand{\Adv}{\mathbf{Adv}}
\newcommand{\coeffid}{\text{coeffid}}
\newcommand{\opbind}[2]{\mu_{#1,#2}}

\title{Threshold Authorization Without Threshold Signatures:\texorpdfstring{\\}{ }
Signature-Agnostic MPC Custody}
\titlerunning{Threshold Authorization Without Threshold Signatures}

\author{Dariia Porechna\orcidID{0009-0009-8834-4652}}
\authorrunning{D. Porechna}
\institute{EternaX Labs\\
\email{dariia.p@eternax.ai}}

\begin{document}
\maketitle

\begin{abstract}
Digital-asset custody has been built on threshold multi-party approval: no operation proceeds unless $t$ of $n$ parties approve, and fewer than $t$ compromised parties can neither authorize nor learn the authorization secret.
Threshold signature schemes (TSS) have been the standard mechanism, but the post-quantum transition disrupts this model: standardized hash-based signatures resist efficient threshold signing, and lattice-based threshold protocols remain an emerging research track.

We present a \emph{dual-gate} architecture that separates member authentication from threshold authorization.
Each member signs its approval with an ordinary signature under any EUF-CMA scheme; the quorum jointly produces a \emph{threshold seal} from Shamir-shared secrets bound to the operation.
The seal is the base instance of a programmable authorization computation: simple quorum is the minimal policy, while richer policies can evaluate secret-shared state without making the member-signature scheme part of that computation.
The signature scheme is a deployment parameter: migrating from ECDSA to SLH-DSA or ML-DSA is a key rotation, not a protocol redesign, and members holding keys in commodity HSMs participate through the standard sign API.

The architecture can be deployed wherever the asset-control path supports programmable verification, such as smart contracts, vault modules, or HSMs guarding a master key, and produces an enforcement-layer authorization rather than a native chain signature.
Below-threshold secrecy is information-theoretic; an adversary holding $\geq t$ signing keys but no coefficient shares still cannot produce the seal.

\keywords{Threshold cryptography \and MPC custody \and Post-quantum migration \and Secret sharing \and Digital-asset custody}
\end{abstract}

\section{Introduction}
\label{sec:intro}

\subsection{The post-quantum custody problem}
\label{subsec:pq-custody}

Digital-asset custody faces the post-quantum transition as a control-plane problem as much as a transaction-signature problem.
A custodian cannot make Bitcoin, Ethereum, or any other chain accept a new signature scheme before the protocol does, but it can make the layers it controls post-quantum ready and cryptographically agile: member authentication, policy enforcement, vendor integrations, backups, and the signing infrastructure that will extend to future on-chain schemes.
That is highlighted in the Taurus June 2026 report~\cite{taurus2026quantum}: readiness means avoiding architecture choices that lock the authorization layer to one classical signature protocol while standards, vendors, and chains are in active research and transition.

The pressure is concrete on both the regulatory and the protocol side.
Executive Order 14412 (June 2026) mandates the transition of federal high-value and high-impact systems to post-quantum digital signatures by the end of 2031 and extends post-quantum FIPS compliance to federal contractors through procurement rules~\cite{eo14412}.
The FIPS-approved hash-based signatures appear on major blockchain migration paths and make the problem more pressing for MPC custody stacks~\cite{ethereum2026pq,taurus2026quantum}.

The cryptographic promise of threshold-signature MPC custody concerns key material: no compromise of fewer than $t$ of the $n$ key shares yields the signing key or a signature.
Threshold ECDSA (GG18/GG20, CMP, FROST~\cite{gennaro2018gg18,canetti2020cmp,komlo2020frost}) delivers this for classical signatures by bundling member authentication and threshold authorization into one cryptographic object.

The post-quantum migration breaks that bundle.
Lattice-based threshold signatures (ML-DSA, Falcon) are an active research and standardization track~\cite{delpino2024thresholdraccoon,delpino2025finally,celi2026efficientmldsa} but not the production baseline occupied by threshold ECDSA.
For hash-based signatures (SLH-DSA~\cite{nist2024fips205}, LMS/HSS), Kondi, Kumar, and Vanegas~\cite{kondi2026blackbox} rule out black-box threshold protocols, so any threshold signing must evaluate the hash function inside MPC at prohibitive cost.
The remaining options are non-black-box MPC for hash evaluation and a trusted setup with large preprocessing state~\cite{kelsey2024distributed}, neither of which are practical product paths for institutional custody.
HSM-based custody can adopt standardized post-quantum signatures as vendor firmware and protocol support arrive; custody built around threshold ECDSA does not have a comparable production path to signature agility.
The question is therefore whether the operational properties of distributed authorization with below-threshold secrecy and share refresh can be preserved for MPC custody without thresholding the signature itself.

\subsection{Separating authentication from authorization}
\label{subsec:separation}

A threshold signature bundles two distinct functions: (a)~member authentication (``who approved'') and (b)~threshold authorization (``did enough parties approve'').
Only~(b) requires the threshold property; only~(a) requires a signature.

Separating the two is an existing custody practice. Production stacks already run an off-chain authorization step ahead of the blockchain signature: in HSM-based dual control, a rule engine verifies each approver's ordinary signature over the transaction intent and, once the policy is satisfied, creates and signs the transaction under a master key~\cite{taurus2026quantum}.
That separation is procedural.
The authorization gate relies on ordinary signature unforgeability plus trusted policy code that counts approvals; it introduces no independent threshold secret hidden from below-threshold coalitions.
Consequently, compromising $t$ approver signing keys is sufficient to pass the gate, and partial key compromise is handled by key rotation rather than share refresh.

The dual gate makes the authorization gate itself cryptographic, backing each of the two functions with the primitive where the required property is native.
Members authenticate with ordinary signatures; threshold authorization comes from a \emph{seal} reconstructed from Shamir-shared secrets, information-theoretically hidden below threshold and proactively refreshable.
This second gate is not merely another signature count: it is a joint evaluation over setup-bound shares, with simple quorum as the degree-1 base case and confidential policy evaluation as the natural extension.
An adversary holding $t$ approver signing keys but not the corresponding coefficient shares passes any signature-counting gate, procedural or on-chain, but cannot produce the seal.

In the classical setting there was no pressure to unbundle the signature itself: threshold ECDSA served custody well, and the bundled signature was the deliverable.
The post-quantum transition changes this.
For hash-based schemes the bundling is precisely what obstructs migration, since the threshold property can no longer be obtained over the signature at practical cost~\cite{kondi2026blackbox}.
Subsequent work has accordingly focused on thresholdizing the signature schemes themselves~\cite{kondi2026blackbox,kelsey2024distributed}.
The dual gate takes the complementary route: the signature is never thresholded, and the threshold property is provided by the gate where it is native and information-theoretic.

What MPC custody provides operationally is an authorization model, distributed authorization with below-threshold secrecy and share refresh, rather than the specific technique of jointly evaluating a signing algorithm.
The dual gate recovers these functional properties without MPC over the signing function, and provides the cryptographic agility for the post-quantum transition: the member-signature scheme is a deployment parameter, so the authorization layer is post-quantum ready ahead of, and independently of, each chain's own signature migration.

\subsection{Contributions}
\label{subsec:contribution}

\begin{enumerate}[leftmargin=*]
  \item \textbf{A decoupled cryptographic authorization gate.}
    A \emph{dual-gate} protocol: an operation passes only through both gates, the \emph{signature check} (each member authenticates with an ordinary signature) and the \emph{seal check} (at least $t$ Shamir-shared evaluations reconstruct a threshold seal bound to the operation).
    Neither gate alone authorizes: stolen member keys satisfy any signature-counting policy but cannot pass the seal check.
  \item \textbf{Ordinary member signatures plus threshold seal.}
    Per-operation approval is non-interactive: each member computes one affine evaluation and one ordinary signature.
    The seal is a Shamir-shared Wegman--Carter one-time authenticator~\cite{wegman1981,krawczyk1994lfsr}; below-threshold secrecy is information-theoretic, and the nontrivial engineering is confined to setup (a JRSS/VSS run).
    This realizes MPC-like authorization semantics without MPC over the signing algorithm: the members jointly evaluate the authorization layer, while signatures only authenticate who contributed.
  \item \textbf{PQ/HSM-friendly migration.}
    The member-signature scheme appears only in the signature check, so it is a deployment parameter.
    Migrating from ECDSA to SLH-DSA or ML-DSA is a key rotation; members in commodity HSMs participate through the standard sign API; mixed-scheme quorums are possible during migration.
  \item \textbf{Programmable enforcement-layer deployment.}
    The construction deploys wherever the asset-control path can run programmable verification logic, such as chain consensus, smart contracts, vaults, account-abstraction layers, or an HSM guarding a master key. The output is an enforcement-layer authorization rather than a native signature.
    The seal generalizes from quorum counting to joint computation of authorization policies over secret-shared state, including confidential policy evaluation (Section~\ref{subsec:programmable}).
\end{enumerate}

\subsection{Context: threshold PQ signatures and multisig}
\label{subsec:related}

The dual gate addresses a gap between two available approaches.

\emph{Threshold PQ signatures.}
Lattice-based threshold signing is an active research track~\cite{delpino2024thresholdraccoon,delpino2025finally,celi2026efficientmldsa} but not yet the production baseline.
For hash-based signatures, black-box threshold protocols are ruled out~\cite{kondi2026blackbox}: threshold signing must evaluate the hash inside MPC.
Kelsey, Lang, and Lucks~\cite{kelsey2024distributed} construct distributed hash-based signatures with a trusted setup or large common reference state, but these still produce threshold signatures; our construction avoids thresholding the member-authentication signature entirely.

\emph{Multisig.}
A $t$-of-$n$ multisig under a post-quantum scheme is deployable wherever the control path can run custom verification logic, whether as smart-account verifier code, future signature precompiles, or the off-chain dual-control pattern above~\cite{ethereum2026pq}.
It provides threshold authorization and attribution but no below-threshold secrecy and no share refresh: a leaked signing key stays leaked until a public key rotation, and compromising $t$ signing keys is total compromise.
Sections~\ref{subsec:multisig} and~\ref{subsec:dualcontrol} detail what the dual gate adds.
The next section presents the construction itself.

\section{The dual-gate architecture}
\label{sec:construction}

The construction instantiates the separation above by requiring an operation $M$ to pass two independent gates over the same canonical message.
The \emph{signature check} verifies that each contribution carries a valid ordinary signature from a registered member.
The \emph{seal check} verifies that at least $t$ setup-bound Shamir shares reconstruct the correct \emph{threshold seal}---the operation-bound evaluation of a jointly held affine map.
Neither gate alone authorizes: signing keys without shares cannot produce the seal; shares without signatures fail the signature check.

The seal is a Shamir-shared instance of the Wegman--Carter one-time authenticator over $\mathbb{F}_p$~\cite{wegman1981,krawczyk1994lfsr}: $\sigma_\nu = k_1 x + k_2$ is a pairwise-independent MAC evaluated at the operation-bound point $x$, which is why coefficient slots are single-use by construction rather than by policy choice (Section~\ref{subsec:slots}).
Figure~\ref{fig:flow} traces one authorization end to end.

\begin{figure}[t]
\centering
\begin{tikzpicture}[
  font=\footnotesize,
  box/.style={draw, rounded corners, align=center, inner sep=4pt},
  lbl/.style={font=\scriptsize, midway, right=3pt},
  flow/.style={-{Stealth[length=2mm]}}
]
\node[box, text width=10.4cm] (setup) {\textbf{Setup} (once per slot batch):
  JRSS distributes slot shares $([k_1]_i,[k_2]_i)$; share commitments recorded under the setup root; member public keys registered};

\node[box, below=0.95cm of setup, text width=3.1cm] (m2) {\textbf{Member $C_j$}\\[1pt]
  policy check; reserve slot\\
  $[\sigma_\nu]_j = [k_1]_j\,x + [k_2]_j$\\
  sign envelope $A_j$};
\node[box, left=0.3cm of m2, text width=3.1cm] (m1) {\textbf{Member $C_i$}\\[1pt]
  policy check; reserve slot\\
  $[\sigma_\nu]_i = [k_1]_i\,x + [k_2]_i$\\
  sign envelope $A_i$};
\node[box, right=0.3cm of m2, text width=3.1cm] (m3) {\textbf{Member $C_k$}\\[1pt]
  policy check; reserve slot\\
  $[\sigma_\nu]_k = [k_1]_k\,x + [k_2]_k$\\
  sign envelope $A_k$};

\node[box, below=0.9cm of m2, text width=4.8cm] (relay) {\textbf{Relay} (untrusted)\\ collects $\geq t$ signed envelopes};

\node[box, below=0.9cm of relay, text width=10.4cm] (enf) {\textbf{Enforcement layer} (holds no secret)\\[1pt]
  (C1) verify member signatures $\to$ (C2) verify openings vs.\ setup root\\
  $\to$ (C3) interpolate the seal $\sigma_\nu$ $\to$ (C4) consume $\coeffid_\nu$ under finality};

\node[box, below=0.5cm of enf, text width=7.8cm] (rcpt) {\textbf{Authorization receipt}: seal $\sigma_\nu$, custody metadata, finality evidence};

\draw[flow, dashed] (setup) -- node[lbl] {slot shares (offline, batched)} (m2);
\draw[flow] (m1) -- (relay);
\draw[flow] (m2) -- node[lbl] {$(A_i, \text{sig}_i)$} (relay);
\draw[flow] (m3) -- (relay);
\draw[flow] (relay) -- node[lbl] {quorum of envelopes for $\opbind{M}{\nu}$} (enf);
\draw[flow] (enf) -- (rcpt);
\end{tikzpicture}
\caption{One dual-gate authorization end to end.
Members act in parallel with no member-to-member interaction; the relay is untrusted; the enforcement layer verifies both gates and holds no secret.}
\label{fig:flow}
\end{figure}

\subsection{Primitives}
\label{subsec:primitives}

\begin{itemize}[leftmargin=*]
  \item $\mathbb{F}_p$, $p \approx 2^{256}$.
  \item $H \colon \{0,1\}^* \to \mathbb{F}_p$, domain-separated.
  \item $\text{SSS}_{t,n}$: Shamir secret sharing, degree-$(t-1)$ polynomials.
  \item $\text{Sig} = (\text{KeyGen}, \text{Sign}, \text{Verify})$: any EUF-CMA-secure scheme with deterministic verification.
  \item $\Pi = (\Com, \Vf)$: a non-interactive commitment scheme, the \emph{share-correctness profile} (Definition~\ref{def:profile}); the recommended instantiation is the salted-hash commitment $\Com(m;\rho) = H(\text{``custody-commit''} \,\|\, \rho \,\|\, m)$.
\end{itemize}

For the post-quantum profile, $\text{Sig}$ must satisfy EUF-CMA against quantum polynomial-time adversaries with classical signing-oracle access, the standard notion claimed for FIPS~205 (SLH-DSA) and FIPS~204 (ML-DSA).
Where the analysis refers to EUF-CMA without qualification, the applicable notion is that of the deployed scheme.

\subsection{Setup (dealer-free JRSS/VSS)}
\label{subsec:setup}

Setup provisions a batch of $B$ single-use coefficient slots, indexed by $\nu \in [B]$; the steps below run once per slot, with fresh randomness for every slot.

\begin{itemize}[leftmargin=*]
  \item For each slot $\nu$, each $C_j$ samples random $a_j^{(\nu)}, b_j^{(\nu)} \in \mathbb{F}_p$, Shamir-shares each at threshold $t$.
  \item Each $C_i$ sums received shares: $[k_1^{(\nu)}]_i = \sum_j [a_j^{(\nu)}]_i$, $[k_2^{(\nu)}]_i = \sum_j [b_j^{(\nu)}]_i$; the summed sharings define degree-$(t-1)$ polynomials $f_1^{(\nu)}, f_2^{(\nu)}$ with $[k_1^{(\nu)}]_i = f_1^{(\nu)}(i)$, $[k_2^{(\nu)}]_i = f_2^{(\nu)}(i)$.
  \item $k_1^{(\nu)} = \sum_j a_j^{(\nu)}$, $k_2^{(\nu)} = \sum_j b_j^{(\nu)}$ are unknown to any individual party.
  \item For every slot $\nu$ and member $C_i$, the setup records the share commitment $\text{com}_i^{(\nu)} = \Com(([k_1^{(\nu)}]_i, [k_2^{(\nu)}]_i);\, \rho_i^{(\nu)})$ under the setup root; the opening randomness $\rho_i^{(\nu)}$ is held by $C_i$ alongside the shares.
  \item Each member registers a member-authentication public key; the enforcement layer stores the setup transcript root and the selected share-correctness profile.
  \item Production setup must guarantee consistency of the live degree-$(t-1)$ polynomials before coefficients can be activated, and must bind each contributor's randomness before any dealing is revealed (commit-then-reveal); detailed in assumption~S (Section~\ref{subsec:model}).
\end{itemize}

Because every slot uses fresh randomness, the slot polynomials are mutually independent, and exposure of one slot's polynomials reveals nothing about any other.
Batching amortizes setup cost across $B$ operations.
Where the slot is fixed by context we drop the superscript and write $[k_1]_i, [k_2]_i$.

\subsection{Per-operation protocol}
\label{subsec:per-op}

\begin{itemize}[leftmargin=*]
  \item Message binding:
    \[
      \opbind{M}{\nu} = H(\text{``custody-op''} \,\|\, \text{address} \,\|\, \text{policyid} \,\|\, \text{optype} \,\|\, \nu \,\|\, H(M))
    \]
  \item Common scalar:
    \[
      x = H(\text{``custody-affine-x''} \,\|\, \opbind{M}{\nu} \,\|\, \coeffid_\nu)
    \]
  \item Each approving $C_i$:
    \begin{enumerate}
      \item Check local policy and reserve the single-use coefficient slot (signer-side reservation / HSM state), exactly as a presignature is reserved before use.
      \item Evaluate: $[\sigma_\nu]_i = [k_1]_i \cdot x + [k_2]_i$
      \item Form envelope $A_i$ over \text{address}, \text{policyid}, $\coeffid_\nu$, $i$, $\opbind{M}{\nu}$, the evaluation $[\sigma_\nu]_i$, and the share-correctness opening (the revealed slot shares and opening randomness $\rho_i^{(\nu)}$), under a canonical injective encoding.
      \item Sign: $\text{sig}_i = \text{Sig.Sign}(\text{sk}_i, A_i)$
    \end{enumerate}
\end{itemize}

\subsection{Acceptance rule}
\label{subsec:acceptance}

\begin{itemize}[leftmargin=*]
  \item \textbf{(C1) Signature check:} $\exists Q \subseteq \mathcal{C}$, $|Q| \geq t$: all member signatures verify.
  \item \textbf{(C2) Seal check, share correctness:} verify each share via the operation-time opening check: $\Vf(\text{com}_i^{(\nu)}, ([k_1]_i, [k_2]_i), \rho_i^{(\nu)}) = 1$ against the setup root, and check $[\sigma_\nu]_i = [k_1]_i x + [k_2]_i$.
  \item \textbf{(C3) Seal check, reconstruction:} Lagrange-interpolate accepted shares to recover the seal $\sigma_\nu$ (Lemma~\ref{lem:consistency}).
  \item \textbf{(C4) Freshness:} $\coeffid_\nu$ not yet consumed; on success, consume it under finality.
\end{itemize}

\subsection{Programmable authorization}
\label{subsec:programmable}

The affine map of $\sigma_\nu = k_1 x + k_2$ is the degree-1 instance of a more general capability: the seal is the evaluation of a function over secrets fixed at setup and bound to the operation.
Representative policies beyond simple quorum include weighted and hierarchical access structures, risk-dependent authorization (the required threshold changes with transaction value), and confidential running state (cumulative limits enforced without revealing the current balance to any individual party).
The security analysis (Appendix~\ref{app:formal}) treats the affine instance; richer policies inherit the authentication-check and secrecy arguments but require a per-policy correctness analysis.

\section{End-to-end authorization flow}
\label{sec:workflow}

Having specified the cryptographic checks, we next place them in the operational lifecycle of a custody system, demonstrating which parts replace threshold signing and which parts remain ordinary custody maintenance.

\subsection{The custody authorization lifecycle}
\label{subsec:five-ops}

A custody authorization system generally exposes five operations:

\begin{enumerate}[leftmargin=*]
  \item $\text{Setup}(\mathcal{C}, t, n)$: register members, distribute key material, set policy.
    No party learns the full authorization secret.
  \item $\text{Approve}(i, M)$: member $C_i$ produces a contribution for operation $M$.
    Below threshold: zero leakage.
  \item $\text{Authorize}(M, (i, c_i)_{i \in Q}, |Q| \geq t)$: produce a verifiable authorization proof bound to $M$.
  \item $\text{Verify}(M, \pi)$: a programmable enforcement layer checks the proof.
    Verifier holds no secret.
  \item $\text{Maintain}()$: refresh shares, rotate members, update policy without revealing the authorization secret.
\end{enumerate}

These map directly onto Section~\ref{sec:construction}: Setup runs the JRSS sharing and registers member public keys; Approve is the per-operation protocol; Authorize collects $t$ envelopes and runs the acceptance rule; Verify is performed by the enforcement layer; Maintain refreshes coefficient shares or rotates member keys, each itself an authorized custody operation.
The authorization artifact is an enforcement-layer receipt (the reconstructed seal, custody metadata, finality evidence), not a signature under the member scheme.

\subsection{Key material and where it lives}
\label{subsec:key-management}

The construction manages three kinds of material with different lifecycles.

\paragraph{Member signing keys (long-term).}
One keypair per member under any EUF-CMA scheme, used only to sign contribution envelopes.
The key can live in a commodity HSM, an enclave, or a signer service: the protocol needs nothing beyond the standard sign API.
Rotation is a Maintain operation and does not touch the threshold layer.

\paragraph{Coefficient shares (per-slot, single-use).}
Each JRSS batch yields $B$ slots; a member stores its slot shares $([k_1]_i,[k_2]_i)$ and commitment opening randomness alongside a reservation flag.
A slot is consumed by exactly one accepted operation; its shares are dead thereafter.
Proactive refresh re-shares unconsumed slots on a schedule.

\paragraph{Public setup material.}
The setup transcript root, share commitments, member public keys, and policy state are public to the enforcement layer.
The enforcement layer holds no secret and can verify but never produce an authorization.

\subsection{Scope}
\label{subsec:boundary}

The construction replaces the custody authorization workflow wherever the asset-control path can run programmable verification logic and enforce both gates.
Smart-contract deployments implement both gates directly.
It does not produce a native threshold signature under any member scheme, and does not transparently control assets whose only control path is a legacy native-account signature verifier.

\section{Deployment: signature agility and HSMs}
\label{sec:deployment}

The lifecycle view isolates the signature primitive from the threshold layer.
This section records the resulting deployment consequences, especially for post-quantum migration and HSM-based custody.

\subsection{The signature scheme is a deployment parameter}
\label{subsec:agility}

The seal's field arithmetic, Shamir sharing, and reconstruction are defined over $\mathbb{F}_p$ with zero coupling to the member-signature primitive.
This yields the cryptographic agility the post-quantum transition demands.

\paragraph{Scheme choice.}
The construction works with SLH-DSA, ML-DSA, ECDSA, EdDSA, or any future EUF-CMA scheme.

\paragraph{Migration is key rotation.}
Moving a member from ECDSA to SLH-DSA is a Maintain operation (register the new public key, retire the old); the threshold layer, policy engine, and audit pipeline are untouched.

\paragraph{Hybrid quorums.}
The verifier checks each envelope under that member's registered key, so members on different schemes can coexist in one quorum, permitting a staged migration.

\paragraph{Post-quantum posture.}
The seal relies only on hash commitments and information-theoretic Shamir secrecy; once every member key is post-quantum, the whole deployment is, with no discrete-logarithm or pairing assumptions.

The security analysis (Appendix~\ref{app:formal}) consumes only the EUF-CMA advantage of each registered key.
A mixed quorum inherits the weakest deployed scheme: an operation approved by a quorum containing even one classical key carries the classical bound for that operation.
By contrast, changing the signature scheme under a threshold signature protocol means a new DKG, a new presignature lifecycle, a new combination algorithm, and a new security analysis.

\subsection{Working with HSMs}
\label{subsec:hsm}

The construction naturally composes with HSM custody.

\paragraph{Member keys in commodity HSMs.}
Member signing keys can live in commodity HSMs: the protocol requires only an ordinary signature over the contribution envelope, so any device exposing a standard sign API for a supported scheme participates directly, and as vendor firmware ships FIPS 204/205 support, HSM-resident post-quantum member keys work without protocol changes.
This differs from threshold signing, where each member's contribution is a scheme-specific partial-signature computation that commodity HSM firmware does not implement.

\paragraph{Custody state beyond the key.}
Standard HSM firmware protects the \emph{signing key} only: coefficient shares, opening randomness, and reservation state live in the member's signer service or in a device with custom firmware.
This storage separation is a feature of the deployment (Section~\ref{subsec:multisig}), but it must be provisioned deliberately.

\paragraph{The HSM as enforcement layer.}
The enforcement layer itself can be an HSM-hosted policy module: it needs signature verification, field arithmetic, and a hash, but no secret key material.

\paragraph{Master-key custody.}
Where an asset master key is held inside one HSM, unlocked by a quorum of approver signatures~\cite{taurus2026quantum}, the HSM \emph{is} the enforcement layer.
At setup it stores the setup root and registered member keys; at operation time it runs C1--C4 and releases the master-key signature only when the seal verifies.
This upgrades the unlock condition from ``count $t$ approver signatures'' to the two-factor quorum of Section~\ref{subsec:multisig}, and the output is a \emph{native} signature under the master key, recovering drop-in compatibility with legacy chains at the cost of the master key's hardware trust anchor.

The practical consequence for the HSM-vs-MPC framing~\cite{taurus2026quantum}: distributed $t$-of-$n$ authorization with below-threshold secrecy no longer requires giving up HSM-resident, standardized signature keys.

\subsection{Enforcement-layer requirements}
\label{subsec:enforcement}

The construction requires the enforcement layer to provide the following components:
(1)~signature verification for the signature check,
(2)~$\mathbb{F}_p$ arithmetic and a hash for the seal check,
(3)~an operation-time share-correctness opening check (hash-committed openings need only a collision-resistant hash; no pairings, no lattice arithmetic, no discrete-log assumption),
(4)~ordered finality for single-use coefficient-id consumption (Section~\ref{subsec:slots}),
(5)~policy-state binding.

\subsection{Deployment surfaces}
\label{subsec:surfaces}

The target is rare, high-value custody operations: withdrawals, mint/burn, key rotation, delegation.
Both gates can serve as consensus-level validity conditions on a native chain, or as acceptance logic in a smart-contract custody module.
Gas cost is the practical boundary for on-chain deployment.

\section{Security rationale}
\label{sec:security}

The deployment claims above rely on the two gates composing in the intended way: signatures provide attribution, while the seal provides threshold authorization and secrecy.
The formal scheme, share-correctness profile definitions, security experiments, theorem statements, and all proofs are collected in Appendices~\ref{app:formal} and~\ref{app:proofs}.
This section states the system model, the custody guarantees, and the attack matrix that deployments should evaluate.

\subsection{System and adversary model}
\label{subsec:model}

\paragraph{Parties.}
$n$ custody members $C_1,\dots,C_n$, each holding a long-term member signing key and per-slot coefficient shares with their opening randomness; an untrusted relay that proposes operations and transports contribution envelopes; and an enforcement layer (chain consensus, contract, vault, or issuer module) that executes C1--C4 over public state.

\paragraph{Enforcement layer.}
Trusted for \emph{integrity only}: it executes C1--C4 correctly and consumes coefficient identifiers atomically in a total order (ordered finality).
It holds no custody secret.
A compromised verifier learns nothing beyond public data and can verify the seal but never produce one (Remark~\ref{rem:verifier}).

\paragraph{Relay and network.}
The relay and network are adversarial: the adversary observes, delays, drops, reorders, and injects envelopes at will.
Withholding envelopes affects liveness only; the guarantees below establish safety.

\paragraph{Corruption model.}
The adversary corrupts members statically, before setup, at two grades:
\begin{itemize}[leftmargin=*]
  \item \emph{Key-only corruption} ($i \in K$): the adversary learns the signing key $\text{sk}_i$; the member itself continues to guard its slot shares and follow signer-side single-use reservation.
  \item \emph{Full corruption} ($i \in F$): the adversary learns $\text{sk}_i$ and the member's entire slot state and acts arbitrarily on its behalf.
\end{itemize}
$K$ and $F$ are disjoint subsets of $[n]$.
The custody threshold assumption is that fewer than $t$ members are fully compromised; the formal analysis quantifies exactly what mixed corruptions ($K \neq \emptyset$) can and cannot do.

\paragraph{Setup assumption (S).}
Setup is modeled as an ideal sharing functionality: for every slot $\nu$ it fixes degree-$(t-1)$ polynomials whose constant terms are uniform and independent across slots, delivers shares and commitment randomness to each member, and publishes commitments under the setup root.
The production JRSS must satisfy: (a)~at least one contributor is honest; (b)~contributor randomness is bound before reveal (commit-then-reveal); (c)~the VSS sub-protocols enforce agreement on the dealt polynomials; (d)~share commitments are derived deterministically from the transcript, so all parties compute the same root.

\subsection{Custody guarantees}
\label{subsec:invariants}

To provide the expected custody functionality, the construction should provide the following guarantees:

\begin{itemize}[leftmargin=*]
  \item \textbf{(I0) Threshold authorization.}
    No set of fewer than $t$ effectively corrupted custody parties can produce an accepted authorization (Theorem~\ref{thm:uf}).
  \item \textbf{(I1) Threshold secrecy.}
    Below-threshold corruption yields no information about the secrets of any unused coefficient slot (Theorem~\ref{thm:priv}).
  \item \textbf{(I2) Per-operation binding.}
    An accepted authorization is bound to $M$ and to one coefficient slot, and cannot be retargeted or replayed (Proposition~\ref{prop:binding}).
  \item \textbf{(I3) Member attribution.}
    Each accepted contribution is attributable to a registered member, and honest members cannot be framed (Proposition~\ref{prop:attr}).
\end{itemize}

Additionally, the deployment should ensure that the following conditions are met:

\begin{itemize}[leftmargin=*]
  \item \textbf{(D1) Share correctness.}
    The share-correctness profile must be hiding and opening-unforgeable (Definition~\ref{def:profile}); the recommended salted-hash profile satisfies both (Lemma~\ref{lem:profiles}).
  \item \textbf{(D2) Single-use coefficient slots.}
    Signer-side reservation (each honest member evaluates each slot for at most one message) and enforcement-side consumption of $\coeffid_\nu$ under ordered finality.
\end{itemize}

The end-to-end security level is $\min$(hash security, member-signature EUF-CMA).
The formal statements and reductions appear in Appendix~\ref{app:formal}; all proofs in Appendix~\ref{app:proofs}.

\subsection{Attack matrix}
\label{subsec:composition}

Table~\ref{tab:attack-matrix} instantiates the formal results against representative adversaries.
The custody threshold is the security boundary: the signature check prevents attribution and relay-substitution attacks; the seal check prevents a below-threshold set of valid member identities from authorizing.

\begin{table}[t]
\caption{Dual-gate composition: representative adversaries.}
\label{tab:attack-matrix}
\centering
\small
\begin{tabular}{@{}>{\raggedright\arraybackslash}p{3.6cm}c>{\raggedright\arraybackslash}p{5.8cm}@{}}
\toprule
\textbf{Attacker capability} & \textbf{Authorize?} & \textbf{Why} \\
\midrule
Steal one member's $\text{sk}_i$ only &
No &
Still needs that member's slot opening and $t-1$ additional signed contributions (Theorem~\ref{thm:uf}) \\
Compromise $< t$ custody parties &
No &
Below-threshold shares give zero information about coefficients (Theorem~\ref{thm:priv}) and cannot reconstruct the seal \\
Steal $\geq t$ member signing keys but not coefficient shares &
No\textsuperscript{$\dagger$} &
Operation-time openings bind each share to setup; a fabricated opening forges the profile commitment (Theorem~\ref{thm:uf}). \textsuperscript{$\dagger$}Per slot: holds for slots with no prior approval by a stolen-key member (Remark~\ref{rem:stolen-keys}) \\
Compromise $\geq t$ custody parties &
Yes &
They can sign envelopes and evaluate the affine map; this is the intended custody threshold ($|F| \geq t$) \\
Compromise validators only &
No &
Validators verify both gates but hold no custody coefficient shares (Remark~\ref{rem:verifier}) \\
\bottomrule
\end{tabular}
\end{table}

\begin{remark}[The stolen-keys guarantee is per-slot]
\label{rem:stolen-keys}
The third row's ``No'' is scoped: a key-compromised member that has approved \emph{any} message on slot $\nu$ has published its slot-$\nu$ opening, and from that point only its signature---which the adversary holds---protected its slot-$\nu$ contribution.
If some $t$ stolen-key members have all previously approved on one unconsumed slot, the adversary can re-sign the published openings under fresh envelopes for a message of its choice.
The row's guarantee therefore covers slots on which no stolen-key member has approved, which includes every slot of a fresh batch; consumed slots are closed by C4 regardless.
Operationally, discovered key compromise should trigger an immediate freeze of unconsumed slots carrying prior approvals from the compromised members.
\end{remark}

\begin{remark}[Verifier independence]
\label{rem:verifier}
The enforcement layer's state is public and its algorithm is deterministic; a compromised verifier learns nothing beyond public data.
What verifier compromise can do is violate the integrity assumption itself---accepting without running C1--C4---which is the standing assumption on the layer that ultimately moves the asset, identical for any custody design including threshold signatures.
\end{remark}

\section{Comparison with deployment baselines}
\label{sec:comparison}

The comparison below clarifies what the presented construction changes relative to the nearest deployment baselines: threshold signing, on-chain multisig, and off-chain dual control.

\subsection{Workflow mapping}
\label{subsec:workflow-table}

Table~\ref{tab:workflow-map} maps each custody lifecycle operation to threshold ECDSA and the dual gate.
The key structural difference: migrating the signature primitive under threshold ECDSA means rebuilding the entire threshold protocol, DKG profile, presignature lifecycle, and integration surface.
Under the dual gate, migration is a key rotation.

\begin{table}[t]
\caption{Threshold ECDSA vs.\ dual gate across the custody lifecycle.}
\label{tab:workflow-map}
\centering
\small
\begin{tabular}{@{}p{1.35cm}>{\raggedright\arraybackslash}p{4.1cm}>{\raggedright\arraybackslash}p{4.4cm}@{}}
\toprule
\textbf{Op} & \textbf{Threshold ECDSA} & \textbf{Dual gate} \\
\midrule
Setup &
DKG distributes ECDSA key shares &
VSS/AVSS/JRSS distributes affine coefficient shares; each member registers a signature public key \\
Approve &
Produce partial ECDSA signature share &
Produce a signed contribution envelope: member signature over $(\opbind{M}{\nu}, \coeffid_\nu, i, [\sigma_\nu]_i)$ and the share opening \\
Authorize &
Combine partial signatures into one ECDSA signature &
Reconstruct the seal $\sigma_\nu$ from authenticated evaluation shares after opening verification \\
Verify &
Verify the combined ECDSA signature against the public key &
Verify member signatures (C1) and the reconstructed seal (C2--C3) \\
Maintain &
Proactive key-share refresh, re-sharing for member rotation &
Proactive coefficient refresh, member key rotation, policy update (each itself an authorized custody op) \\
\midrule
Output &
Native threshold ECDSA signature verifiable by any ECDSA checker &
Enforcement-layer authorization receipt; not a native signature, unless the enforcement layer is an HSM guarding a master key (Section~\ref{subsec:hsm}) \\
\bottomrule
\end{tabular}
\end{table}

\subsection{On-chain multisig as a baseline}
\label{subsec:multisig}

A $t$-of-$n$ multisig under a post-quantum scheme is the simplest programmable baseline: register $n$ member keys and have the verifier count $t$ valid signatures.
On Ethereum today this means a smart-account or account-abstraction path until future signature precompiles or native signature verification are deployed; deployment is limited by gas cost, signature size, and the keys that control upgrades or account recovery~\cite{ethereum2026pq}.
It provides threshold authorization, per-operation binding, and attribution, and no threshold-signing problem arises because nothing is thresholded.
The dual gate keeps that simplicity and adds what a signature count cannot provide.

\paragraph{A second, independent authorization factor.}
In multisig, compromising $t$ signing keys is total compromise.
In the dual gate, an adversary holding $\geq t$ stolen signing keys still fails the seal check (Table~\ref{tab:attack-matrix}): coefficient shares are separate material, typically in separate storage, with information-theoretic below-threshold secrecy.

\paragraph{Confidential, programmable policy.}
Multisig policy is public verifier code over public identities.
The seal evaluates policy over state that is itself secret-shared, which no single member or verifier can read or unilaterally alter (Section~\ref{subsec:programmable}).

\paragraph{Below-threshold recovery.}
Proactive refresh re-randomizes coefficient shares, so below-threshold share exfiltration is neutralized by the next refresh.
Multisig has no analogous recovery: a leaked signing key stays leaked until a public key rotation.

\subsection{Off-chain dual control as a baseline}
\label{subsec:dualcontrol}

The same quorum logic is implemented today as an off-chain product pattern: a rule engine, typically hosted in an HSM, verifies each approver's ordinary signature over the transaction intent and, once $t$ valid approvals are counted, creates and signs the blockchain transaction under a master key~\cite{taurus2026quantum}.
Cryptographically this is the multisig of Section~\ref{subsec:multisig} relocated behind a trust anchor, and it inherits the same limits: compromising $t$ approver keys is total compromise, the policy is trusted code over public identities, and there is no below-threshold secrecy or share refresh.
It also concentrates trust beyond what on-chain multisig requires: the rule engine and the master key form a single point whose compromise bypasses the quorum entirely.

The dual gate composes with this deployment rather than competing with it.
Where the control path is programmable, the enforcement layer runs both gates and holds no secret at all.
Where a master key deployment is preferable, Section~\ref{subsec:hsm} keeps the HSM as the enforcement layer but upgrades its unlock condition from ``count $t$ approver signatures'' to the two-factor quorum: an adversary holding $t$ approver keys still satisfies the rule engine's signature count but cannot produce the seal.

\subsection{Strengths}
\label{subsec:strengths}

\paragraph{Cryptographically simpler.}
No curve arithmetic, no multi-round signing MPC, no zero-knowledge machinery in the signing path: one field evaluation and one ordinary signature per member, verified by hashing and interpolation.

\paragraph{More efficient per operation.}
Approval is non-interactive given a slot; the online phase is one round of envelope collection.

\paragraph{Signature-scheme agnostic.}
Any EUF-CMA scheme, including hash-based schemes for which threshold signing is prohibitively expensive~\cite{kondi2026blackbox}.

\paragraph{No long-lived joint algebraic key.}
Threshold ECDSA publishes a group public key whose discrete log is the joint signing key, a standing target for a future quantum adversary.
The dual gate publishes no joint key; member keys are individually replaceable and below-threshold secrecy of the seal coefficients is information-theoretic.

\paragraph{HSM-compatible.}
Members and verifiers need only standard primitives (Section~\ref{subsec:hsm}).

\section{Operational requirements and limitations}
\label{sec:operational}

Separating authorization from native threshold signing moves some costs into operational deployment and enforcement instead of absorbing them inside the signature abstraction.

\subsection{Limitations}
\label{subsec:limitations}

\paragraph{Not a signature.}
The output is a seal plus metadata, verifiable by an enforcement layer that implements both gates; TSS outputs a native signature any standard verifier accepts.
Assets controlled solely by a legacy native-account verifier are out of scope unless the enforcement layer is an HSM guarding the account's master key (Section~\ref{subsec:hsm}).

\paragraph{Custom verifier logic.}
The enforcement layer must implement the acceptance rule (Section~\ref{subsec:enforcement}); TSS verification is a stock signature check.

\paragraph{Per-operation state.}
Single-use coefficient slots must be provisioned and their consumption tracked under finality---the same discipline as presignatures, but it is state a stateless-verification design would not carry.

\paragraph{Quorum visibility.}
Accepted envelopes identify the approving members to the enforcement layer.
Attribution is a custody feature, but a combined threshold signature can hide the approving quorum from the verifier.

\paragraph{Setup requirements.}
Production deployments need a VSS/AVSS/JRSS setup that exports the per-share commitment material the opening check consumes.

\paragraph{Economic quorum capture.}
Bribing or coercing $t$ members defeats the construction exactly as it defeats multisig and threshold signing; the custody threshold is an assumption about member independence that no authorization cryptography replaces.

\subsection{Single-use coefficient slots}
\label{subsec:slots}

A coefficient slot is the analog of a presignature in production threshold ECDSA and must be consumed exactly once.
Enforcement is two-layered: enforcement-layer $\coeffid$ consumption under ordered finality prevents a second \emph{accepted} operation, and signer-side reservation prevents a policy-conforming signer from emitting two evaluations of one slot.
Reuse burns the affected slot's ephemeral authorizer only; unlike ECDSA nonce reuse it does not leak any long-term key.
Batch pre-sharing amortizes setup cost for frequent operations; just-in-time sharing suits rare, high-value custody ops.
The batch cost is modest: the JRSS round count is independent of $B$ (one commit round, one dealing round, and the consistency rounds of the chosen VSS), and bandwidth is linear in $B$; at $n = 20$, $B = 10^4$, and 32-byte field elements a member transmits ${\sim}12$\,MB per batch, an unremarkable offline run.

Ordered finality requires that for each slot $\nu$, the enforcement layer produces exactly one accept decision, atomic with the consumption of $\coeffid_\nu$, in one total order visible to every system that acts on authorizations.
On multi-shard or rollup architectures, slot consumption must live at one serialization point common to all surfaces that can accept operations.
Slot exhaustion is a liveness consideration, never a safety one: an empty batch stalls new authorizations until the next JRSS run and forfeits nothing.

\subsection{Setup root management}
\label{subsec:root}

The setup transcript root anchors every share-correctness opening.
It is authenticated to the enforcement layer at deployment: a chain state variable, a contract storage variable, or HSM state provisioned at commissioning.
Exactly one root is live per custody wallet.
Proactive refresh re-shares unconsumed slots and re-commits them under a new root; the rotation is itself an authorized custody operation, consuming a slot under the old root.

\subsection{Open questions}
\label{subsec:open}

\begin{itemize}[leftmargin=*]
  \item \textbf{Refresh and root rotation:} how refreshed shares re-commit under a new root without an authorization gap, including dual-root acceptance windows and cross-root replay analysis.
  \item \textbf{Slot management under asynchrony:} availability tradeoffs of signer-side reservation versus just-in-time setup, particularly for small thresholds.
  \item \textbf{Setup instantiation:} which VSS/AVSS/JRSS protocols export the per-share verification material the hash-commitment profile consumes. Natural candidates are the Cachin--Kursawe--Lysyanskaya--Shoup AVSS framework~\cite{cachin2002avss} for post-quantum deployments and KZG polynomial-commitment AVSS~\cite{kate2010kzg} for classical; verifying a specific production protocol against assumption~S conditions (a)--(d) remains open.
  \item \textbf{Adaptive corruption:} the model is static; lifting to adaptive corruption faces the usual commit-and-guess obstacles.
\end{itemize}

\section{Reference implementation}
\label{sec:implementation}

To validate that these requirements are implementable with standard components, we include a compact Rust demonstrator of the full dual-gate protocol with a pluggable share-correctness profile.
The implementation uses the secp256k1 scalar field for seal-check arithmetic and Ed25519 for the signature check; both are classical stand-ins chosen for demonstrator convenience.
A production post-quantum profile pairs a post-quantum member scheme (SLH-DSA or ML-DSA) with the hash-commitment profile.

Three share-correctness profiles are implemented:
the \textit{hash-committed opening profile} (recommended, post-quantum), the \textit{secp256k1 Pedersen commitment profile} (classical discrete-log demonstrator), and an \textit{external-AVSS-transcript adapter} for existing setups that do not export public operation-time openings.
The test suite covers Shamir roundtrips, happy-path custody (2-of-3, 3-of-5), quorum consistency, message binding, rejection of the keys-without-shares adversary under the hash-commitment profile, rejection of tampered evaluations, and standard rejection cases.

The reference implementation is publicly available~\cite{dualgateimpl}.

\section{Conclusion}
\label{sec:conclusion}

Digital asset custody needs threshold multi-party approval, and threshold signatures have been the default way to get it. 
The post-quantum transition changes the price on that default: for standardized hash-based schemes, any threshold signing protocol must evaluate the hash function inside MPC at prohibitive cost~\cite{kondi2026blackbox}.
The dual gate does not pay that price, because it never thresholds the signature: members sign ordinarily, and the threshold property is provided by a seal jointly computed from Shamir-shared secrets, where it is native and information-theoretic.
The result is custody authorization from standard primitives in which the signature scheme is a deployment parameter: post-quantum migration is a key rotation, hybrid quorums need no flag day, and the same protocol runs at consensus level, in a smart contract, or inside an HSM guarding a master key.
What a threshold signature provides and the dual gate does not is a native signature any stock verifier accepts; where an enforcement layer can check two gates instead of one signature, the migration becomes routine.

\paragraph{Acknowledgments.}
The author thanks Paarrthhh Sharad Birla (EternaX Labs), whose early identification of the Taurus report~\cite{taurus2026quantum} and the post-quantum gap in MPC custody prompted this line of work; Jean-Philippe Aumasson (Taurus) for feedback on the post-quantum custody landscape and the practical deployment considerations; and Chen Feng (University of British Columbia) for detailed discussions on the security model and the relationship to threshold and MPC primitives.

\bibliographystyle{splncs04}
\bibliography{references}

\appendix
\renewcommand{\theHsection}{appendix.\arabic{section}}

\section{Formal security model and results}
\label{app:formal}

For completeness, this appendix records the formal scheme definition, share-correctness profile, security experiments, and all theorem statements supporting Section~\ref{sec:security}.
All proofs are collected in Appendix~\ref{app:proofs}.
Throughout, $\lambda$ is the security parameter, adversaries are probabilistic polynomial time (PPT), members are indexed by $i \in [n]$, and coefficient slots by $\nu \in [B]$.

\subsection{The share-correctness profile}

\begin{definition}[Share-correctness profile]
\label{def:profile}
A share-correctness profile is a non-interactive commitment scheme $\Pi = (\Com, \Vf)$ over message space $\mathbb{F}_p^2$ and randomness space $R_\lambda$: $\Com(m; \rho)$ outputs a commitment, $\Vf(\text{com}, m, \rho) \in \{0,1\}$ is deterministic, and $\Vf(\Com(m;\rho), m, \rho) = 1$ for all $m, \rho$.
The profile must satisfy:
\begin{itemize}[leftmargin=*]
  \item \textbf{Hiding.} For every PPT $\mathcal{D}$ choosing $m_0, m_1$,
    \[
      \Adv^{\mathrm{hide}}_\Pi(\mathcal{D}) = \bigl| \Pr[\mathcal{D}(\Com(m_0;\rho)) = 1] - \Pr[\mathcal{D}(\Com(m_1;\rho)) = 1] \bigr|,
      \quad \rho \leftarrow R_\lambda,
    \]
    is negligible.
  \item \textbf{Opening unforgeability (OU).} For every PPT $\mathcal{A}$,
    \[
      \begin{aligned}
        \Adv^{\mathrm{ou}}_\Pi(\mathcal{A}) ={} &
        \Pr\bigl[ m \leftarrow \mathcal{A};\ \rho \leftarrow R_\lambda; \\
        & (m', \rho') \leftarrow \mathcal{A}(\Com(m;\rho)) \,:\,
          \Vf(\Com(m;\rho), m', \rho') = 1 \bigr]
      \end{aligned}
    \]
    is negligible.
\end{itemize}
\end{definition}

OU is the binding-style property the acceptance rule actually consumes: a commitment cannot be opened \emph{at all} without its opening information, even by an adversary that already knows the committed message.
This is what defeats the keys-without-shares adversary.

\subsection{Deployed profiles}

\begin{lemma}[Deployed profiles]
\label{lem:profiles}
(i)~The salted-hash profile, with $\rho \leftarrow \{0,1\}^\lambda$ and $\ell$-bit output,
\[
  \Com(m;\rho) = H(\text{``custody-commit''} \,\|\, \rho \,\|\, m),
\]
satisfies, with $H$ modeled as a random oracle that is only observed and never programmed,
\[
  \Adv^{\mathrm{hide}} \leq q_H\,2^{-\lambda}
  \;\text{ and }\;
  \Adv^{\mathrm{ou}} \leq q_H\,2^{-\lambda} + (q_H + 1)\,2^{-\ell}
\]
for adversaries making $q_H$ oracle queries.
(ii)~The Pedersen profile in a prime-order group with independent generators,
\[
  \Com((m_1, m_2); \rho) = g_1^{m_1} g_2^{m_2} h^{\rho},
\]
is perfectly hiding, and $\Adv^{\mathrm{ou}}$ is bounded by the discrete-log advantage in that group.
\end{lemma}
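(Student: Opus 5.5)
For part~(i) I will argue in the random-oracle model by a bad-event argument, with the oracle lazily sampled and only observed. The whole proof rests on one observation. The commitment $\Com(m;\rho)$ is the oracle's answer at the point $X_m = \text{``custody-commit''}\,\|\,\rho\,\|\,m$. Until the adversary queries some point whose salt component equals $\rho$, the value $\Com(m;\rho)$ is a uniform $\ell$-bit string independent of the adversary's whole view except for the commitment itself. The canonical injective encoding of $(\rho,m)$ fixes $\rho$ at a known position, so "a query with salt $\rho$" is well defined.

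\textbf{Hiding.} I will define $\mathsf{Bad}$ as the event that some query of $\mathcal{D}$ has salt component $\rho$. Conditioned on $\neg\mathsf{Bad}$, the input to $\mathcal{D}$ is a uniform string in both worlds $m_0$ and $m_1$. The two views are therefore identical until $\mathsf{Bad}$ happens, so the advantage is at most $\Pr[\mathsf{Bad}]$. Each query hits the uniform $\lambda$-bit salt with probability at most $2^{-\lambda}$, since until then the salt is information-theoretically hidden. A union bound over $q_H$ queries gives $q_H 2^{-\lambda}$. I will use a fundamental-lemma (identical-until-bad) hybrid so that the adaptivity of the queries is handled cleanly.

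\textbf{Opening unforgeability.} Suppose $\mathcal{A}$ outputs $(m',\rho')$ that verifies. I will split into three cases.
\begin{enumerate}
\item $\rho'=\rho$. Either some query of $\mathcal{A}$ carried salt $\rho$, which happens with probability at most $q_H 2^{-\lambda}$ exactly as in the hiding argument, or $\mathcal{A}$ guessed $\rho$ without any such query. A blind guess succeeds with probability $2^{-\lambda}$, and I will fold this into the count, treating the final output as one more query if needed.
\item $\rho'\neq\rho$ and the point $X'$ for $(m',\rho')$ was queried. Then $X'\neq X_m$, so success is a collision of $H$ with the fixed target value $\Com(m;\rho)$ at a fresh point. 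Each such query matches with probability $2^{-\ell}$, giving at most $q_H 2^{-\ell}$.
\item $X'$ was never queried. Then the verifier's own evaluation at $X'$ is fresh and matches the target with probability $2^{-\ell}$.
\end{enumerate}
Summing the cases gives $q_H 2^{-\lambda}+(q_H+1)2^{-\ell}$. The main obstacle is the bookkeeping in case~1: the stated bound has no extra $2^{-\lambda}$ term for the blind guess. I will absorb it either by counting $\Vf$'s evaluation as a query or by noting that, with $\ell\le\lambda$ in the intended regime, $2^{-\lambda}\le 2^{-\ell}$. I will also check the subtle point that, conditioned on no salt-$\rho$ query, $\mathcal{A}$'s view carries no information about $\rho$ beyond the uniform commitment value.

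For part~(ii), perfect hiding is immediate. For uniform $\rho$, $h^\rho$ is uniform in the group and independent of $m$, so the commitment is uniform for every message.

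For opening unforgeability I will build a reduction $\mathcal{B}$ that receives a discrete-log instance $(g,h)$. It sets $g_1=g^{\alpha}$ and $g_2=g^{\beta}$ for random known exponents, runs $\mathcal{A}$ to get $m$, samples $\rho$, and gives $\mathcal{A}$ the commitment $c=g_1^{m_1}g_2^{m_2}h^\rho$. Suppose $\mathcal{A}$ returns $(m',\rho')$ with the same commitment value. If $\rho'\neq\rho$, then $\alpha(m_1-m_1')+\beta(m_2-m_2')=\log_g h\cdot(\rho'-\rho)$, which yields $\log_g h$. If $\rho'=\rho$ but $m'\neq m$, $\mathcal{B}$ embeds the challenge in $g_2$ (or $g_1$) instead, using a random choice of embedding so the loss is at most a constant factor. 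The only remaining case is $(m',\rho')=(m,\rho)$. This requires predicting a uniform $\rho$ that is perfectly hidden, so it has probability $1/|\mathbb{G}|$ and is negligible.
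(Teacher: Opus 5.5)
\textbf{Gap in part~(ii): the case $(m',\rho')=(m,\rho)$.} Your treatment of $\rho'\neq\rho$ and of $\rho'=\rho,\ m'\neq m$ is correct. The last case is not. You claim that returning $(m',\rho')=(m,\rho)$ means predicting a perfectly hidden uniform $\rho$, so it succeeds with probability $1/|\mathbb{G}|$. That is false. Perfect hiding says the commitment is independent of $m$; it does not hide $\rho$ from a party that knows $m$. In the OU game the adversary chose $m$ itself, so it can compute $h^{\rho}=\text{com}\cdot g_1^{-m_1}g_2^{-m_2}$. Since $h$ generates a prime-order group, $\rho$ is then uniquely determined by its view, and an unbounded adversary wins this case with probability~$1$. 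This is exactly the case the paper's application cares about, because the keys-without-shares adversary may know the shares (Remark~\ref{rem:salt}). Your reduction cannot catch it: $\mathcal{B}$ sampled $\rho$ itself and learns nothing when $\mathcal{A}$ hands it back. The paper treats this case computationally: opening with $(m,\rho)$ is computing the discrete logarithm of $\text{com}\cdot g_1^{-m_1}g_2^{-m_2}$ to base $h$. To turn that into a reduction, add a third embedding. $\mathcal{B}$ chooses $g_1,g_2,h$ itself, takes a challenge $Y$ to base $h$, and publishes $\text{com}=g_1^{m_1}g_2^{m_2}Y$. This is correctly distributed because $Y$ is uniform, and an output $(m,\rho')$ then gives $\rho'=\log_h Y$. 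All three embeddings give the adversary identically distributed views, so choosing one at random yields the discrete-log bound up to a factor of~$3$.

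\textbf{Part~(i).} This otherwise matches the paper's proof: identical-until-bad on a query carrying salt $\rho$, then a union bound for hitting the $\ell$-bit target. Your bookkeeping concern is real, and the paper shares it: its argument also never charges an \emph{unqueried} final output equal to the challenge preimage. Your first remedy settles it. If the output evaluation is counted among the $q_H$ queries, case~3 disappears and you get $q_H2^{-\lambda}+q_H2^{-\ell}$, inside the stated bound. The $\ell\le\lambda$ remark does not settle it, since it only yields $q_H2^{-\lambda}+(q_H+2)2^{-\ell}$.
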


\begin{proof}
(i)~Hiding: the commitment is the oracle's output on an input containing the uniform salt $\rho$; the distributions for $m_0$ and $m_1$ differ only if the distinguisher queries an input with prefix $\text{``custody-commit''} \| \rho$, an event of probability at most $q_H 2^{-\lambda}$.
OU: a valid opening satisfies $H(\text{``custody-commit''} \| \rho' \| m') = \text{com}$; the adversary either queries the challenge preimage itself (probability at most $q_H 2^{-\lambda}$ over $\rho$), or one of its other queries or its final unqueried output collides with the $\ell$-bit value $\text{com}$ (probability at most $(q_H+1) 2^{-\ell}$).
(ii)~Perfect hiding is standard ($h^\rho$ is uniform).
For OU, a valid opening $(m', \rho')$ satisfies $g_1^{m'_1} g_2^{m'_2} h^{\rho'} = g_1^{m_1} g_2^{m_2} h^{\rho}$: if $(m', \rho') = (m, \rho)$ the adversary has computed the discrete logarithm $\rho$ of $\text{com} \cdot g_1^{-m_1} g_2^{-m_2}$ to base $h$; otherwise it has produced a nontrivial discrete-logarithm relation among $g_1, g_2, h$.
\end{proof}

\begin{remark}[Quantum adversaries and the QROM]
\label{rem:qrom}
Lemma~\ref{lem:profiles}(i) is stated in the classical random-oracle model.
A quantum adversary evaluates $H$ in superposition, so the post-quantum profile must be assessed in the quantum random-oracle model (QROM)~\cite{boneh2011qrom}.
Both properties lift with the generic square-root degradation.
Hiding: by the one-way-to-hiding lemma~\cite{ambainis2019o2h}, $\Adv^{\mathrm{hide}} = O(q_H\,2^{-\lambda/2})$.
OU: by the average-case search bounds of~\cite{hulsing2016multitarget}, $\Adv^{\mathrm{ou}} = O(q_H^2\,2^{-\ell} + q_H^2\,2^{-\lambda})$, and the multi-target variant across all $nB$ commitments is $O(q_H^2\,nB\,2^{-\ell})$.
At $\lambda = \ell = 256$ the single-target bounds retain ${\approx}128$ bits of quantum security, and the multi-target bound stays above $100$ bits even for $nB = 2^{40}$ slots.
The collision advantage $\Adv^{\mathrm{cr}}_H$ of Theorem~\ref{thm:uf} is $O(q_H^3\,2^{-\ell})$ in the QROM~\cite{zhandry2015collision}; a deployment wanting this term at $128$ bits instantiates the operation-binding hash at $\ell = 384$.
Because every reduction in this paper only observes the oracle and never programs or rewinds it, the lifting is a term-by-term replacement of the classical lemmas by these QROM bounds.
\end{remark}

\begin{remark}[The salt is load-bearing]
\label{rem:salt}
A deterministic hash commitment over uniform shares is hiding and one-way in the random-oracle model,
but it is \emph{not} opening-unforgeable against an adversary that knows the shares:
whoever reconstructs a member's shares can recompute the commitment and open it.
That adversary is reachable: $t-1$ fully corrupted shares plus one honest approval of a \emph{different} message on the same slot determine the slot polynomials.
The per-share salt $\rho$ closes this: knowing the shares does not yield an opening,
so cross-message exposure of a slot never escalates beyond the members whose own keys are stolen.
\end{remark}

\subsection{Scheme and acceptance}

Write $\mathsf{DG} = (\mathsf{Setup}, \mathsf{Approve}, \mathsf{Accept})$ for the construction of Section~\ref{sec:construction}: $\mathsf{Setup}$ is the ideal functionality of assumption S together with member key registration; $\mathsf{Approve}(i, M, \nu)$ produces member $i$'s signed envelope per Section~\ref{subsec:per-op}; $\mathsf{Accept}$ is the stateful acceptance rule of Section~\ref{subsec:acceptance}.
On submission of an operation $(M, \nu)$ with envelopes $\{(A_i, \text{sig}_i)\}_{i \in Q}$ for a set $Q$ of distinct member indices, $|Q| \geq t$, $\mathsf{Accept}$ recomputes $\opbind{M}{\nu}$ and $x$ and checks
C1: each $\text{sig}_i$ verifies under the registered $\text{pk}_i$ and the fields of $A_i$ match $(\opbind{M}{\nu}, \coeffid_\nu, i)$;
C2: $\Vf(\text{com}_i^{(\nu)}, s'_i, \rho'_i) = 1$ for the opened shares $s'_i = (k'_{1,i}, k'_{2,i})$, and $[\sigma_\nu]_i = k'_{1,i}\, x + k'_{2,i}$;
C3: interpolate $\sigma_\nu$;
C4: $\coeffid_\nu \notin \mathsf{Consumed}$, and on acceptance add it.

\begin{lemma}[Quorum consistency]
\label{lem:consistency}
Under assumption S, honest envelopes for the same $(M, \nu)$ from any two quorums $Q, Q' \subseteq [n]$ of size $t$ are accepted and reconstruct the same seal $\sigma_\nu = k_1^{(\nu)} x + k_2^{(\nu)}$.
\end{lemma}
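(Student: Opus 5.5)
The plan is to verify acceptance check by check for an arbitrary size-$t$ quorum of honest envelopes, and then show that reconstruction does not depend on which quorum was used. Fix $(M,\nu)$. Every honest member recomputes $\opbind{M}{\nu}$ and $x$ deterministically from public data, so all members and $\mathsf{Accept}$ use the same scalar $x$.

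\textbf{Acceptance.} Take any size-$t$ quorum $Q$ of honest envelopes and check each condition in turn.
\begin{itemize}[leftmargin=*]
  \item C1 holds by signature correctness, since each honest $A_i$ carries exactly the fields $(\opbind{M}{\nu},\coeffid_\nu,i)$ that $\mathsf{Accept}$ recomputes.
  \item C2 holds by commitment correctness in Definition~\ref{def:profile}. Assumption~S publishes $\text{com}_i^{(\nu)}=\Com(([k_1]_i,[k_2]_i);\rho_i^{(\nu)})$ under the root, and the honest member opens with exactly these values, so $\Vf$ returns $1$. The affine check holds because the honest member computed $[\sigma_\nu]_i=[k_1]_i x+[k_2]_i$.
  \item C4 holds provided the slot is unconsumed. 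This caveat should be stated: the lemma is about the first accepted submission, or equivalently it compares what each quorum would reconstruct against the same fresh state.
\end{itemize}

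\textbf{Quorum independence.} Assumption~S fixes degree-$(t-1)$ polynomials $f_1,f_2$ with $[k_b]_i=f_b(i)$ and $f_b(0)=k_b^{(\nu)}$. Define $g(z)=x f_1(z)+f_2(z)$. Then $g$ has degree at most $t-1$, $g(i)=[\sigma_\nu]_i$, and $g(0)=k_1^{(\nu)}x+k_2^{(\nu)}$. The member indices are distinct nonzero elements of $\mathbb{F}_p$, since $n<p$. For any $Q$ with $|Q|=t$, Lagrange interpolation at $0$ therefore gives
\[
\sum_{i\in Q}\lambda_i^Q\,[\sigma_\nu]_i = g(0),
\]
because a polynomial of degree below $t$ is uniquely determined by its values at $t$ points. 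The same holds for $Q'$, so both quorums yield $\sigma_\nu=k_1^{(\nu)}x+k_2^{(\nu)}$.

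\textbf{Main obstacle.} The algebra is routine. The point needing care is that condition~(c) of assumption~S, agreement on the dealt polynomials, is what guarantees the shares lie on one common polynomial of degree at most $t-1$. The linearity step, that a fixed linear combination of two degree-$(t-1)$ sharings is again such a sharing, is what makes the seal shares consistent.
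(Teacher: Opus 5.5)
Your proof is correct and follows the paper's argument. The honest seal shares are evaluations of $g = x f_1^{(\nu)} + f_2^{(\nu)}$, which has degree at most $t-1$, so interpolation at $0$ from any $t$ distinct indices returns $g(0) = k_1^{(\nu)}x + k_2^{(\nu)}$. Your explicit walk through C1, C2 and C4 supplies the ``are accepted'' half that the paper's one-line proof leaves implicit. This includes the correct caveat that C4 lets only one of the two submissions actually be accepted against a given consumed-set state.
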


\begin{proof}
Honest evaluations are values $g(i) = x f_1^{(\nu)}(i) + f_2^{(\nu)}(i)$ of the polynomial $g = x f_1^{(\nu)} + f_2^{(\nu)}$ of degree at most $t-1$; any $t$ distinct points determine $g$, and C3 returns $g(0) = x k_1^{(\nu)} + k_2^{(\nu)}$ for either quorum.
\end{proof}

\subsection{Security experiments}

All experiments share the following environment.
The adversary $\mathcal{A}$ first outputs parameters $(t, n, B)$ and disjoint corruption sets $K, F \subseteq [n]$.
The challenger runs $\mathsf{Setup}$, hands $\mathcal{A}$ the public state, the signing keys $\{\text{sk}_i\}_{i \in K \cup F}$, and the full slot state $\{(s_i^{(\nu)}, \rho_i^{(\nu)})\}_{i \in F,\, \nu \in [B]}$, and then answers two oracles:
\begin{itemize}[leftmargin=*]
  \item $\mathcal{O}_{\mathsf{Approve}}(i, M, \nu)$ for $i \notin F$: returns $\mathsf{Approve}(i, M, \nu)$, which reveals $(s_i^{(\nu)}, \rho_i^{(\nu)})$ by design.
    At most one query per pair $(i, \nu)$ is answered (D2).
  \item $\mathcal{O}_{\mathsf{Accept}}(\cdot)$: runs the stateful $\mathsf{Accept}$ on an adversarial submission, returns the result, and updates $\mathsf{Consumed}$.
\end{itemize}

\begin{definition}[Effective approver set]
\label{def:effective}
For a message $M$ and slot $\nu$, the \emph{effective approver set} at any point of an experiment is
\[
  \begin{aligned}
    G(M, \nu) ={} & F \\
    & \cup \{\, i \in K : \mathcal{O}_{\mathsf{Approve}}(i, \cdot, \nu) \text{ was queried} \,\} \\
    & \cup \{\, i \notin K \cup F : \mathcal{O}_{\mathsf{Approve}}(i, M, \nu) \text{ was queried} \,\}.
  \end{aligned}
\]
\end{definition}

A fully corrupted member effectively approves everything; an honest member effectively approves exactly what it signed; and a key-compromised member effectively approves anything on slot $\nu$ once it has exposed its slot-$\nu$ opening by approving some message on that slot.

\begin{definition}[Security experiments]
\label{def:experiments}
\begin{itemize}[leftmargin=*]
  \item \textbf{Threshold unforgeability ($\mathrm{TA}$-$\mathrm{UF}$).}
    $\mathcal{A}$ wins if some $\mathcal{O}_{\mathsf{Accept}}$ query accepts an operation $(M^*, \nu^*)$ while $|G(M^*, \nu^*)| < t$.
    $\Adv^{\mathrm{uf}}(\mathcal{A})$ is its winning probability.
  \item \textbf{Unused-slot secrecy ($\mathrm{TA}$-$\mathrm{Priv}$).}
    $\mathcal{A}$ announces a challenge slot $\nu^*$ with $|F| \leq t - 1$, never queries $\mathcal{O}_{\mathsf{Approve}}(\cdot, \cdot, \nu^*)$.
    The challenger samples $b \leftarrow \{0,1\}$ and gives $\mathcal{A}$ either $z_0 = k^{(\nu^*)}$ or $z_1 \leftarrow \mathbb{F}_p^2$; $\mathcal{A}$ outputs $b'$, and
    $\Adv^{\mathrm{priv}}(\mathcal{A}) = \bigl| \Pr[b' = 1 \mid b = 0] - \Pr[b' = 1 \mid b = 1] \bigr|$.
  \item \textbf{Attribution ($\mathrm{TA}$-$\mathrm{Attr}$).}
    $\mathcal{A}$ wins if some accepted submission contains a verifying envelope attributed to some $i \notin K \cup F$ although $\mathcal{O}_{\mathsf{Approve}}(i, M^*, \nu^*)$ was never queried.
\end{itemize}
\end{definition}

\subsection{Threshold unforgeability (I0)}
\label{subsec:unforgeability}

\begin{theorem}[Threshold unforgeability]
\label{thm:uf}
Under setup assumption S, for every PPT adversary $\mathcal{A}$ against $\mathrm{TA}$-$\mathrm{UF}$ there are reductions $\mathcal{B}_1, \mathcal{B}_2, \mathcal{B}_3$, each with running time close to $\mathcal{A}$'s, such that
\[
  \Adv^{\mathrm{uf}}(\mathcal{A})
  \;\leq\; n \cdot \Adv^{\mathrm{euf\text{-}cma}}_{\mathrm{Sig}}(\mathcal{B}_1)
  \;+\; nB \cdot \Adv^{\mathrm{ou}}_{\Pi}(\mathcal{B}_2)
  \;+\; \Adv^{\mathrm{cr}}_{H}(\mathcal{B}_3).
\]
\end{theorem}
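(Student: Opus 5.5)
The plan is to fix a winning $\mathcal{A}$, look at the first accepted operation $(M^*,\nu^*)$ with $|G(M^*,\nu^*)|<t$, and classify it by a bad event. Acceptance requires a set $Q$ of $t$ distinct members whose envelopes pass C1 and C2. Since $|G|<t$, some $i^*\in Q\setminus G(M^*,\nu^*)$ exists. Such an $i^*$ is either (a) honest ($i^*\notin K\cup F$) and never queried on $(M^*,\nu^*)$, or (b) key-compromised ($i^*\in K$) and never queried on slot $\nu^*$ at all. Full corruptions lie in $G$ by definition, so these are the only two cases.

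In case (a) I will separate two sub-events. In the first, the envelope $A_{i^*}$ that verifies under $\text{pk}_{i^*}$ was never output by $\mathcal{O}_{\mathsf{Approve}}(i^*,\cdot,\cdot)$. Then $\mathcal{B}_1$ guesses $i^*\in[n]$ and embeds the EUF-CMA challenge key as $\text{pk}_{i^*}$. It simulates everything else itself: setup is ideal under S, so $\mathcal{B}_1$ samples all slot polynomials, salts and commitments, and answers $\mathsf{Approve}$ queries for $i^*$ through the signing oracle. It outputs $(A_{i^*},\text{sig}_{i^*})$ as the forgery. This costs the factor $n$.

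In the second sub-event, $A_{i^*}$ equals an envelope returned for some query $(i^*,M',\nu')$ with $(M',\nu')\neq(M^*,\nu^*)$. C1 checks the fields of $A_{i^*}$ against $(\opbind{M^*}{\nu^*},\coeffid_{\nu^*},i^*)$. The canonical injective encoding then forces $\opbind{M'}{\nu'}=\opbind{M^*}{\nu^*}$, and since $\nu$ is hashed inside $\opbind{M}{\nu}$, this is a collision of $H$ on distinct inputs. $\mathcal{B}_3$ simulates the whole game honestly and outputs that collision. One point needs care here: if $\nu'=\nu^*$ and $M'\neq M^*$, the collision is in the inner $H(M)$ or the outer hash. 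Either way it is a collision.

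In case (b), $\mathcal{A}$ holds $\text{sk}_{i^*}$, so signatures give no protection. But C2 demands a valid opening of $\text{com}_{i^*}^{(\nu^*)}$, and by the definition of $G$ the opening $\rho_{i^*}^{(\nu^*)}$ was never released. $\mathcal{B}_2$ guesses the pair $(i^*,\nu^*)\in[n]\times[B]$, which gives the factor $nB$. It plays the OU game by committing to $m=s_{i^*}^{(\nu^*)}$, which it samples itself as part of the simulated setup, and inserts the challenge commitment into the setup root. It produces every other commitment itself. The openings $(s'_{i^*},\rho'_{i^*})$ that $\mathcal{A}$ submits are its OU output.

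Simulation is perfect. Nothing ever needs $\rho_{i^*}^{(\nu^*)}$: no approve query is made on $(i^*,\nu^*)$ before the win, and OU allows the adversary to know $m$. This is the point of Remark~\ref{rem:salt}, since $\mathcal{A}$ may reconstruct $s_{i^*}^{(\nu^*)}$ from other approvals on the slot. I also need to check that C2 for a key-only member in case (a) never forces an opening event, which it does not.

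The main obstacle is making the case analysis exhaustive and ensuring the first bad event is well defined, so that each reduction can simulate up to that point. Key concerns:
\begin{itemize}
\item Later queries after a guessed slot may require the withheld opening. I handle this by stopping the simulation at the first winning accept, and by noting that winning means no query on $(i^*,\nu^*)$ had occurred by then.
\item The freshness check C4 plays no role in the bound, only in preventing reuse.
\end{itemize}

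A union bound over the three events gives the stated inequality.
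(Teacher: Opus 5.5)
Your proposal is correct and follows essentially the same route as the paper. Both pick a witness $i^\dagger \in Q \setminus G(M^*,\nu^*)$ and split on its corruption grade: if it is honest, you get an EUF-CMA forgery via $\mathcal{B}_1$, or a collision in computing $\mu_{M,\nu}$ via $\mathcal{B}_3$ when the envelope is a replayed oracle output; if it is key-compromised, $\mathcal{B}_2$ embeds the OU challenge commitment at a guessed pair and obtains an OU break, and a union bound finishes. The paper draws the guesses from $[n]\setminus(K\cup F)$ and $K\times[B]$ (possible because $K,F$ are announced before setup) rather than from $[n]$ and $[n]\times[B]$, which yields the same factors $n$ and $nB$ without your implicit aborts; your explicit halt at the first winning accept is a slightly cleaner justification of the abort than the paper's wording.
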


The statement needs no bound on $|F|$: if $|F| \geq t$ then $|G(M, \nu)| \geq t$ for every operation and the game is unwinnable by definition, the intended custody threshold.
The proof (Appendix~\ref{app:proofs}) splits on the corruption grade of a member the forged quorum must contain: an honest member's envelope yields an EUF-CMA forgery or a hash collision, and a key-compromised member's envelope requires forging a commitment opening.

\begin{remark}[Concrete bounds]
\label{rem:concrete}
With the salted-hash profile the OU term is at most $nB\,(q_H\,2^{-\lambda} + (q_H+1)\,2^{-\ell})$, negligible even for large slot batches at $\lambda = \ell = 256$.
The information-theoretic floor is that a blind guess of an opening-consistent share pair succeeds with probability $p^{-2} \approx 2^{-512}$ per attempt.
The end-to-end forgery level is $\min(\text{member-signature EUF-CMA},\ \text{profile security})$.
\end{remark}

\subsection{Threshold secrecy (I1)}
\label{subsec:secrecy}

\begin{theorem}[Unused-slot secrecy]
\label{thm:priv}
Under setup assumption S, for every PPT $\mathcal{A}$ against $\mathrm{TA}$-$\mathrm{Priv}$ with $|F| \leq t-1$ there is a hiding distinguisher $\mathcal{D}$ with comparable running time such that
\[
  \Adv^{\mathrm{priv}}(\mathcal{A}) \;\leq\; 2\,(n - |F|) \cdot \Adv^{\mathrm{hide}}_{\Pi}(\mathcal{D}).
\]
The advantage is $0$ for the perfectly hiding Pedersen profile and at most $2(n-|F|)\,q_H\,2^{-\lambda}$ for the salted-hash profile in the random-oracle model; given the shares alone, ignoring the commitment layer, secrecy is information-theoretic.
\end{theorem}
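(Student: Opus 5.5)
The plan is a hybrid argument that replaces, one at a time, the commitments to the non-fully-corrupted members' challenge-slot shares by commitments to zero. Once all of them are replaced, the adversary's view depends on $k^{(\nu^*)}$ only through the $|F|\le t-1$ fully corrupted shares. Those shares are independent of the secret by perfect Shamir secrecy.

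First I fix what the adversary sees about slot $\nu^*$. It sees the full slot state $(s_i^{(\nu^*)},\rho_i^{(\nu^*)})$ for $i\in F$. For every other $i$, it sees only the commitment $\text{com}_i^{(\nu^*)}$ under the setup root. No envelope is ever published for slot $\nu^*$, because $\mathcal{A}$ never queries $\mathcal{O}_{\mathsf{Approve}}(\cdot,\cdot,\nu^*)$. Everything else comes from other slots: their shares, openings, evaluations and signatures, plus $\mathcal{O}_{\mathsf{Accept}}$ answers. By assumption S, other slots' polynomials are sampled independently of slot $\nu^*$, so the challenger can simulate them with its own fresh randomness. The $\mathcal{O}_{\mathsf{Accept}}$ answers on submissions naming $\nu^*$ need care. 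I plan to argue they are computable from public state alone: C2 checks an opening supplied in the submission against public commitments, so the simulator can run $\mathsf{Accept}$ verbatim, and it never needs the honest shares of slot $\nu^*$.

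Next come the hybrids. Enumerate the indices $i\notin F$ as $i_1,\dots,i_{n-|F|}$. Hybrid $\mathsf{H}_j$ replaces $\text{com}_{i_1},\dots,\text{com}_{i_j}$ for slot $\nu^*$ by $\Com((0,0);\rho)$ with fresh $\rho$. For a fixed bit $b$, consecutive hybrids differ in a single commitment whose opening is never revealed. So a distinguisher between $\mathsf{H}_{j-1}$ and $\mathsf{H}_j$ yields a hiding distinguisher $\mathcal{D}$ with messages $m_0=s_{i_j}^{(\nu^*)}$ and $m_1=(0,0)$. $\mathcal{D}$ samples the whole setup itself, so it knows these messages, embeds its challenge commitment in position $i_j$, and simulates the rest. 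This costs $(n-|F|)\Adv^{\mathrm{hide}}_\Pi$ per value of $b$, and the factor $2$ comes from doing this once for $b=0$ and once for $b=1$.

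In the final hybrid, the view is a function of $\{s_i^{(\nu^*)}\}_{i\in F}$, of independent randomness, and of $z_b$. Since $|F|\le t-1$, the $F$-shares of the two degree-$(t-1)$ polynomials are uniform on $\mathbb{F}_p^{2|F|}$ and independent of $(k_1,k_2)$. Hence $(\text{view},k^{(\nu^*)})$ and $(\text{view},z_1)$ are identically distributed, and the final-hybrid advantage is $0$. This also gives the information-theoretic statement when the commitment layer is ignored. The Pedersen and salted-hash instantiations then follow from Lemma~\ref{lem:profiles}.

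The main obstacle is the simulation step. I must make sure that nothing outside the commitments correlates with slot $\nu^*$: accept-oracle answers for $\nu^*$, the setup root as a hash over all commitments, and consumption state. I would first check that the root is recomputable from the substituted commitments, which holds by assumption S(d), and that $\mathsf{Accept}$ uses only public data.
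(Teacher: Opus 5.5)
Your proposal is correct and follows essentially the same route as the paper's proof. It uses hybrids that replace the non-$F$ challenge-slot commitments by commitments to $(0,0)$, each step reducing to hiding with $m_0 = s_{i_j}^{(\nu^*)}$ and $m_1 = (0,0)$ while $\mathcal{O}_{\mathsf{Accept}}$ is answered by running the public $\Vf$; the last hybrid then relies on perfect Shamir secrecy of the $|F| \le t-1$ corrupted shares, and the factor $2$ comes from running the hybrids in both the $b=0$ and $b=1$ experiments. Your extra check that the setup root is recomputable from the substituted commitments is a detail the paper leaves implicit, and it does not change the argument.
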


\begin{remark}[Used and exposed slots]
\label{rem:used-slots}
Secrecy is claimed only for unused slots.
A consumed slot's shares are public by design, and harmlessly so: C4 guarantees no future operation depends on them.
A slot on which an honest member has approved some message is partially exposed, since that member's shares travel in its envelope.
Operators should time slot lifecycles from envelope publication rather than acceptance, and proposal flow should keep at most one live operation per slot.
Theorem~\ref{thm:uf} keeps the exposure confined: computing the slot polynomials yields neither the per-member opening randomness that C2 demands (Remark~\ref{rem:salt}) nor any signature, so the exposed contribution remains protected by the member's signature, escalates to effective approval only when that member's signing key is also stolen, and never affects other slots or any long-term key.
\end{remark}

\subsection{Operation binding (I2) and member attribution (I3)}

\begin{proposition}[Per-operation binding]
\label{prop:binding}
(a)~An accepted authorization for $(M, \nu)$ cannot be retargeted: a submission for $(M', \nu)$ with $M' \neq M$ that reuses any of its envelopes either fails C1 or yields a collision in $H$.
(b)~At most one operation is ever accepted per coefficient slot.
(c)~The binding covers the custody metadata (address, policy id, operation type), which are inputs to $\opbind{M}{\nu}$.
\end{proposition}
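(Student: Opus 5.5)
Each of the three parts is a direct check against the acceptance rule $\mathsf{Accept}$. The only cryptographic inputs are the injective canonical encoding of envelopes and collision resistance of $H$.

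For (a), fix an accepted submission for $(M,\nu)$. Let $(A_i,\text{sig}_i)$ be one of its envelopes, reused in a submission for $(M',\nu)$ with $M'\neq M$. Under C1, $\mathsf{Accept}$ recomputes $\opbind{M'}{\nu}$ and requires the $\opbind{}{}$ field of $A_i$ to equal it. Since $A_i$ was accepted for $(M,\nu)$, that field equals $\opbind{M}{\nu}$. Because the encoding is injective, the field is uniquely determined by $A_i$, and $A_i$ is exactly the signed string; so reuse without changing $A_i$ (and hence without a new signature) forces $\opbind{M}{\nu}=\opbind{M'}{\nu}$. There are then two cases.
\begin{itemize}
\item If $H(M)\neq H(M')$, the two preimages of $\opbind{}{}$ differ in the $H(M)$ component. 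The domain-separated encoding is injective, so they are distinct inputs with equal outputs, which is a collision in $H$.
\item If $H(M)=H(M')$, then $(M,M')$ is itself a collision.
\end{itemize}
If instead the adversary modifies $A_i$ to carry $\opbind{M'}{\nu}$, the original $\text{sig}_i$ no longer verifies on the new envelope. The submission then fails C1 unless a fresh signature is supplied, and supplying one is not reuse. The retargeted case of $\text{sig}_i$ verifying on a modified $A_i$ is a forgery, handled by Proposition~\ref{prop:attr} or Theorem~\ref{thm:uf}, and lies outside this statement. The reduction $\mathcal{B}_3$ simply outputs the collision it finds.

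For (b), I appeal directly to C4 and the ordered-finality assumption on the enforcement layer. $\mathsf{Accept}$ accepts only if $\coeffid_\nu\notin\mathsf{Consumed}$, and it atomically adds $\coeffid_\nu$ on acceptance. Because decisions occur in one total order, a second acceptance for slot $\nu$ would have to occur after the first, and at that point the check fails. This part is unconditional, given the integrity trust assumption.

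For (c), the metadata fields are part of the input to $\opbind{M}{\nu}$. The argument of (a) therefore applies verbatim with $(M,\text{address},\text{policyid},\text{optype})$ as the bound tuple. Changing any component while reusing envelopes requires equal $\opbind{}{}$ values on distinct injectively encoded inputs, which is a collision in $H$. The envelope also carries address and policyid explicitly, so tampering with them directly breaks C1.

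The main obstacle is purely definitional: stating precisely that ``reusing an envelope'' means reusing the signed pair unchanged, and that the encoding into $H$ is injective. With those two points fixed, everything reduces to C1, C4, and collision resistance.
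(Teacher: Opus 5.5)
Your proposal is correct and follows the paper's proof. For (a), C1 forces $\opbind{M'}{\nu} = \opbind{M}{\nu}$ and hence an $H$-collision; (b) follows from the monotone consumed set being updated atomically under ordered finality; and (c) follows directly because the metadata are inputs to $\opbind{M}{\nu}$. Your explicit case split on whether $H(M) = H(M')$ only makes precise where the collision lands in the nested hash, which the paper compresses into ``distinct inputs with equal $H$-outputs''.
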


\begin{proposition}[Attribution and non-frameability]
\label{prop:attr}
For every PPT $\mathcal{A}$ against $\mathrm{TA}$-$\mathrm{Attr}$,
\[
  \Pr[\mathcal{A} \text{ wins}] \;\leq\; n \cdot \Adv^{\mathrm{euf\text{-}cma}}_{\mathrm{Sig}} + \Adv^{\mathrm{cr}}_{H}.
\]
\end{proposition}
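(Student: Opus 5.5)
The proof is a reduction to EUF-CMA security of $\mathrm{Sig}$, guessing the framed member. Fix an adversary $\mathcal{A}$ against $\mathrm{TA}$-$\mathrm{Attr}$. It wins if some accepted submission $(M^*, \nu^*)$ contains a verifying envelope $(A^*, \text{sig}^*)$ attributed to a member $i^* \notin K \cup F$, even though $\mathcal{O}_{\mathsf{Approve}}(i^*, M^*, \nu^*)$ was never queried. Because C1 checks that the fields of $A^*$ match $(\opbind{M^*}{\nu^*}, \coeffid_{\nu^*}, i^*)$, we can split on whether the exact encoded envelope $A^*$ was ever signed by the honest member $i^*$ inside some $\mathcal{O}_{\mathsf{Approve}}(i^*, M, \nu)$ answer.

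\emph{Case 1: $A^*$ was never signed by $i^*$.} Then $\text{sig}^*$ is an EUF-CMA forgery. The reduction $\mathcal{B}_1$ guesses $i^* \in [n]$ uniformly, embeds its challenge public key as $\text{pk}_{i^*}$, and generates all other member keys itself. It simulates $\mathsf{Setup}$ as the ideal functionality of assumption S, so it knows all shares and opening randomness. It answers $\mathcal{O}_{\mathsf{Approve}}(i^*, \cdot, \cdot)$ by building the envelope itself and calling its signing oracle, and it runs $\mathsf{Accept}$ honestly on public state. If $i^*$ lands in $K \cup F$ the guess is wrong and $\mathcal{B}_1$ aborts; since corruption is static, the sets are declared before key generation, so a correct guess is never aborted, which matters for the loss. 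Otherwise $(A^*, \text{sig}^*)$ is a valid fresh forgery. A correct guess has probability at least $1/n$, which gives the factor $n$.

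\emph{Case 2: $A^*$ equals an envelope $A$ that $i^*$ signed in response to $\mathcal{O}_{\mathsf{Approve}}(i^*, M, \nu)$ with $(M,\nu) \neq (M^*,\nu^*)$.} By the canonical injective encoding, the bound fields agree, so $\opbind{M}{\nu} = \opbind{M^*}{\nu^*}$ and $\coeffid_\nu = \coeffid_{\nu^*}$. The inputs to $H$ include $\nu$ and $H(M)$, and $\coeffid$ identifies the slot, so $\nu = \nu^*$. Hence $M \neq M^*$, and the domain-separated inputs to the outer hash (and, if those agree, to the inner $H(M)$) differ while producing equal outputs. This is a collision in $H$, which $\mathcal{B}_3$ outputs after running the game with all keys generated honestly. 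This case is exactly Proposition~\ref{prop:binding}(a) applied to a single envelope.

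The main obstacle is the case split itself. One must argue that the envelope contents $A^*$ fully determine $(M^*, \nu^*)$ up to an $H$-collision, so that ``never queried on $(M^*,\nu^*)$'' implies ``never signed this $A^*$''. This relies on the canonical injective encoding together with the fact that both $\nu$ and $H(M)$ enter $\opbind{M}{\nu}$. A secondary point is that $\mathcal{B}_1$ can simulate the $\mathcal{O}_{\mathsf{Accept}}$ state faithfully without any secret of $i^*$, because acceptance uses only public data and signature verification. Summing the two cases gives $\Pr[\mathcal{A}\text{ wins}] \le n\cdot\Adv^{\mathrm{euf\text{-}cma}}_{\mathrm{Sig}}(\mathcal{B}_1) + \Adv^{\mathrm{cr}}_H(\mathcal{B}_3)$.
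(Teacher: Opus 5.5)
Your proof is correct and is essentially the paper's argument: the paper observes that a $\mathrm{TA}$-$\mathrm{Attr}$ win is exactly Case~1 of the proof of Theorem~\ref{thm:uf} and reuses $\mathcal{B}_1$ and $\mathcal{B}_3$ verbatim. $\mathcal{B}_1$ guesses the framed honest member, embeds the EUF-CMA key and answers that member's approvals via the signing oracle, while $\mathcal{B}_3$ uses that a re-signed envelope forces $\opbind{M}{\nu^*} = \opbind{M^*}{\nu^*}$ with $M \neq M^*$, which is precisely your case split. Guessing over all of $[n]$ and aborting on $K \cup F$, rather than sampling from $[n]\setminus(K\cup F)$, is a cosmetic difference, as is your extra care with the nested $H(M)$.
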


\subsection{Composition}

\begin{corollary}[Custody authorization guarantees]
\label{prop:custody-guarantees}
Let $\mathrm{Sig}$ be any EUF-CMA signature scheme and $\Pi$ a hiding, opening-unforgeable share-correctness profile, and let the enforcement layer satisfy the integrity assumption of Section~\ref{subsec:model}.
Then the construction of Section~\ref{sec:construction} realizes the custody lifecycle of Section~\ref{subsec:five-ops} with guarantees I0--I3 and the concrete bounds of Theorems~\ref{thm:uf} and~\ref{thm:priv} and Propositions~\ref{prop:binding} and~\ref{prop:attr}, and maintenance can refresh coefficient shares or rotate member keys without changing the authorization logic.
$\mathrm{Sig}$ appears in the bounds only through its EUF-CMA advantage, so it is a deployment parameter rather than a protocol parameter.
\end{corollary}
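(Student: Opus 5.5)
The corollary is a composition statement, so I will not prove anything new. The plan is to check, for each lifecycle operation of Section~\ref{subsec:five-ops}, that the construction of Section~\ref{sec:construction} implements it, and then attach I0--I3 to the earlier results one at a time.

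\textbf{Functionality.} Setup is the ideal functionality of assumption~S together with member key registration. The ``no party learns the full secret'' clause is the $|F|\le t-1$ case of Theorem~\ref{thm:priv}. Approve is $\mathsf{Approve}$. For Authorize and Verify, honest quorums are accepted by Lemma~\ref{lem:consistency}, using correctness of $\Vf$ and of $\mathrm{Sig}$. Verify needs only public state, which covers the ``verifier holds no secret'' clause (Remark~\ref{rem:verifier}).

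\textbf{Guarantees I0--I3.} I instantiate the four guarantees directly:
\begin{itemize}[leftmargin=*]
\item I0 is Theorem~\ref{thm:uf}, read through Definition~\ref{def:effective}. Acceptance with fewer than $t$ effective approvers happens only with the stated advantage.
\item I1 is Theorem~\ref{thm:priv}. It uses the hiding hypothesis on $\Pi$.
\item I2 is Proposition~\ref{prop:binding}. Its part~(b) uses the integrity and ordered-finality assumption on the enforcement layer through C4.
\item I3 is Proposition~\ref{prop:attr}.
\end{itemize}
The hypotheses ``EUF-CMA'' and ``hiding, opening-unforgeable'' are exactly what make each bound negligible. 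For the salted-hash profile, Lemma~\ref{lem:profiles} gives concrete values, and Remark~\ref{rem:qrom} covers the quantum case.

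\textbf{Maintenance.} Two kinds of change are involved.
\begin{itemize}[leftmargin=*]
\item \emph{Member key rotation} replaces $\text{pk}_i$ in public state. C2--C4 never read $\text{pk}_i$, and C1 is parametric in the registered key. Every reduction embeds an EUF-CMA challenge into one registered key, which is where the factor $n$ comes from. So the same bounds hold with the advantage taken over the currently registered scheme for each member. For mixed schemes this is a maximum over schemes.
\item \emph{Coefficient refresh} re-shares the unconsumed slots and commits them under a new root. I would model it as a fresh run of the assumption~S functionality on the unconsumed slot set. The theorems then apply verbatim to the new batch. The refresh is itself authorized by consuming an old-root slot, so I0 and I2 cover it.
\end{itemize}

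\textbf{Main obstacle.} The delicate step is the refresh clause. Cross-root replay and gap analysis are left open in Section~\ref{subsec:open}. I would therefore state the claim only under the idealization that refresh is an instance of~S and exactly one root is live (Section~\ref{subsec:root}). Under that idealization, the bounds apply per root epoch, and C4 consumption carries across the root change.

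The final sentence of the corollary is syntactic. Inspecting the bounds of Theorems~\ref{thm:uf} and~\ref{thm:priv} and Propositions~\ref{prop:binding} and~\ref{prop:attr} shows that $\mathrm{Sig}$ appears only through $\Adv^{\mathrm{euf\text{-}cma}}_{\mathrm{Sig}}$.
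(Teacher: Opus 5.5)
Your proposal is correct and follows the paper's approach. The paper gives no separate proof of this corollary and obtains it exactly as you do: correctness from Lemma~\ref{lem:consistency}, and I0--I3 read off Theorems~\ref{thm:uf} and~\ref{thm:priv} and Propositions~\ref{prop:binding} and~\ref{prop:attr}. Your scoped treatment of refresh goes slightly beyond the paper, which only asserts that the acceptance logic is unchanged.

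Two small corrections. First, EUF-CMA and hiding/OU alone do not make every bound negligible: Theorem~\ref{thm:uf} and Proposition~\ref{prop:attr} also carry $\Adv^{\mathrm{cr}}_{H}$, and Proposition~\ref{prop:binding}(a) reduces retargeting to a collision in $H$, so collision resistance of $H$ is an implicit extra hypothesis. Second, across key rotations the factor $n$ in Theorem~\ref{thm:uf} counts every key ever registered, since the reduction must guess which one to embed, so your bounds hold per epoch rather than verbatim over the whole lifetime.
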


\section{Deferred proofs}
\label{app:proofs}

\subsection*{Proof of Theorem~\ref{thm:uf} (threshold unforgeability)}

\begin{proof}
Condition on a win: some $\mathcal{O}_{\mathsf{Accept}}$ query accepted $(M^*, \nu^*)$ with a quorum $Q$ of at least $t$ distinct indices, all of C1--C4 passing, while $|G(M^*, \nu^*)| < t$ at that moment.
Since $|Q| \geq t > |G(M^*, \nu^*)|$, there is an index $i^\dagger \in Q \setminus G(M^*, \nu^*)$; fix the smallest.
As $F \subseteq G(M^*, \nu^*)$ we have $i^\dagger \notin F$, leaving two exhaustive cases.
Write $\varepsilon_{\mathrm{sig}}$ and $\varepsilon_{\mathrm{open}}$ for the probabilities of a win with $i^\dagger \notin K$ and $i^\dagger \in K$ respectively, so $\Adv^{\mathrm{uf}}(\mathcal{A}) \leq \varepsilon_{\mathrm{sig}} + \varepsilon_{\mathrm{open}}$.

\emph{Case 1: $i^\dagger \notin K \cup F$ (signature forgery or hash collision).}
By C1 the submission carries a valid signature $\text{sig}_{i^\dagger}$ under $\text{pk}_{i^\dagger}$ on an envelope $A_{i^\dagger}$ whose fields include $(\opbind{M^*}{\nu^*}, \coeffid_{\nu^*}, i^\dagger)$.
Honest $C_{i^\dagger}$ signed only the envelopes returned by $\mathcal{O}_{\mathsf{Approve}}(i^\dagger, \cdot, \cdot)$, and $i^\dagger \notin G(M^*, \nu^*)$ means the query $(i^\dagger, M^*, \nu^*)$ never occurred.
The envelope encoding is canonical and injective, so $A_{i^\dagger}$ coincides with a previously signed envelope only if that envelope was produced for some $(M, \nu)$ with $\coeffid_\nu = \coeffid_{\nu^*}$ (hence $\nu = \nu^*$) and $\opbind{M}{\nu^*} = \opbind{M^*}{\nu^*}$ despite $M \neq M^*$; the latter exhibits distinct inputs with equal outputs in the computation of $\opbind{M}{\nu^*}$, and $\mathcal{B}_3$ outputs the colliding pair.
Otherwise $A_{i^\dagger}$ was never signed by the oracle and $(A_{i^\dagger}, \text{sig}_{i^\dagger})$ is an EUF-CMA forgery.
Reduction $\mathcal{B}_1$ picks $j$ uniformly in $[n] \setminus (K \cup F)$, installs the EUF-CMA challenge key as $\text{pk}_j$, runs the rest of setup itself, answers $\mathcal{O}_{\mathsf{Approve}}(j, \cdot, \cdot)$ through its signing oracle, and on a win with $i^\dagger = j$ outputs the forgery.
So $\varepsilon_{\mathrm{sig}} \leq n \cdot \Adv^{\mathrm{euf\text{-}cma}}_{\mathrm{Sig}}(\mathcal{B}_1) + \Adv^{\mathrm{cr}}_{H}(\mathcal{B}_3)$.

\emph{Case 2: $i^\dagger \in K \setminus G(M^*, \nu^*)$ (opening forgery).}
By Definition~\ref{def:effective} no query $\mathcal{O}_{\mathsf{Approve}}(i^\dagger, \cdot, \nu^*)$ occurred, so the opening randomness $\rho_{i^\dagger}^{(\nu^*)}$ was never released.
The adversary may well know the shares $s_{i^\dagger}^{(\nu^*)}$ themselves, but C2 accepted an opening $(s', \rho')$ with $\Vf(\text{com}_{i^\dagger}^{(\nu^*)}, s', \rho') = 1$, and producing \emph{any} valid opening without the dealt randomness is exactly the OU game of Definition~\ref{def:profile}.
Reduction $\mathcal{B}_2$ guesses $(j, \mu) \leftarrow K \times [B]$, runs setup itself, submits $m = s_j^{(\mu)}$ to the OU challenger, and installs the returned commitment as $\text{com}_j^{(\mu)}$.
It can answer every oracle query: the only value it lacks is $\rho_j^{(\mu)}$, needed only for $\mathcal{O}_{\mathsf{Approve}}(j, \cdot, \mu)$, and if that query arrives then $j \in G(\cdot, \mu)$ and the guess was already wrong, so $\mathcal{B}_2$ aborts.
In the salted-hash instantiation the random oracle requires no simulation: $H$ is a public oracle shared by $\mathcal{A}$, $\mathcal{B}_2$, and the OU challenger, and $\mathcal{B}_2$ only observes it, so the embedding is exact.
In particular, an adversary that queries $H$ on the challenge preimage $\text{``custody-commit''} \,\|\, \rho_j^{(\mu)} \,\|\, s_j^{(\mu)}$, thereby learning the withheld salt, has not evaded the reduction; it has produced a valid opening, the event already charged inside $\Adv^{\mathrm{ou}}_{\Pi}$ by the $q_H\,2^{-\lambda}$ term of Lemma~\ref{lem:profiles}.
On a win with a correct guess it outputs $(s', \rho')$, so
$\varepsilon_{\mathrm{open}} \leq nB \cdot \Adv^{\mathrm{ou}}_{\Pi}(\mathcal{B}_2)$.
\end{proof}

\subsection*{Proof of Theorem~\ref{thm:priv} (unused-slot secrecy)}

\begin{proof}
Let $h = n - |F|$ and let $i_1 < \dots < i_h$ enumerate $[n] \setminus F$.
Consider hybrids $\mathsf{H}_0, \dots, \mathsf{H}_h$, where $\mathsf{H}_0$ is the $b = 0$ experiment and $\mathsf{H}_j$ replaces the challenge-slot commitments $\text{com}_{i_1}^{(\nu^*)}, \dots, \text{com}_{i_j}^{(\nu^*)}$ by commitments to $(0,0)$ under fresh randomness.
Adjacent hybrids are simulatable from a hiding challenge: the reduction samples all setup secrets itself, submits $m_0 = s_{i_j}^{(\nu^*)}$ and $m_1 = (0,0)$, and embeds the challenge commitment as $\text{com}_{i_j}^{(\nu^*)}$.
It never needs the challenge opening: approvals on $\nu^*$ are disallowed by the experiment, and $\mathcal{O}_{\mathsf{Accept}}$ only runs the public $\Vf$ on adversary-supplied openings.
Since the reduction knows $k^{(\nu^*)}$ it can hand over the challenge value for either bit, so $|\Pr[\mathsf{H}_{j-1} = 1] - \Pr[\mathsf{H}_j = 1]| \leq \Adv^{\mathrm{hide}}_{\Pi}(\mathcal{D})$.

In $\mathsf{H}_h$ the view of $\mathcal{A}$ before the challenge depends on slot $\nu^*$ only through the shares $\{s_i^{(\nu^*)}\}_{i \in F}$ delivered at setup: the $\nu^*$-commitments are now independent of the slot, no $\nu^*$-envelope exists, and all other slots use independent polynomials (assumption S).
These are at most $t-1$ evaluations of each of two degree-$(t-1)$ polynomials, so by perfect secrecy of Shamir sharing they are jointly independent of the uniform $k^{(\nu^*)}$.
Hence in $\mathsf{H}_h$ the real challenge $z_0 = k^{(\nu^*)}$ is uniform and independent of the view, exactly the distribution of $z_1$, and the $b=0$ and $b=1$ experiments coincide.
Walking the same hybrids back inside the $b = 1$ experiment gives the factor $2$.
\end{proof}

\subsection*{Proof of Proposition~\ref{prop:binding} (per-operation binding)}

\begin{proof}
(a)~Every envelope fixes $\opbind{M}{\nu}$ under the member's signature; acceptance for $M'$ recomputes $\opbind{M'}{\nu}$ from the submitted operation and C1 requires the signed field to equal it, so a reused envelope forces $\opbind{M'}{\nu} = \opbind{M}{\nu}$, i.e.\ distinct inputs with equal $H$-outputs.
(b)~is the C4 state machine: $\coeffid_\nu$ enters the monotone consumed set at the first acceptance, and the enforcement layer evaluates C4 atomically in finality order, so a second acceptance is impossible unconditionally.
(c)~is immediate from the definition of $\opbind{M}{\nu}$.
\end{proof}

\subsection*{Proof of Proposition~\ref{prop:attr} (attribution and non-frameability)}

\begin{proof}
A win is precisely Case 1 of Theorem~\ref{thm:uf} for the framed member: a verifying envelope under $\text{pk}_i$ with $i \notin K \cup F$ for an operation $(M^*, \nu^*)$ never queried for $i$; the reductions $\mathcal{B}_1$ and $\mathcal{B}_3$ apply verbatim.
Conversely, every accepted contribution names its member index inside the signed envelope, so accepted quorums are attributable by construction, and relay substitution, share swapping, and impersonation all invalidate $\text{sig}_i$.
\end{proof}

\end{document}